\newtheorem{theorem}{Theorem}[section]
\newtheorem{corollary}{Corollary}[section]
\newtheorem{lemma}{Lemma}[section]
\newtheorem{proposition}{Proposition}[section]
\let\epsilon=\varepsilon
\title{Uniformly Bounded Initial Chaos in Large System Often Intensifies
  Infinitely}
\author{A.A.~Lykov\thanks{Mechanics and Mathematics Faculty, Lomonosov Moscow
State University, Leninskie Gory~1, Moscow, 119991, Russia} \and  V.A.~Malyshev\footnotemark[1]}
\begin{document}

\maketitle

\begin{abstract}
We consider infinite harmonic chain with completely deterministic dynamics.
Initial data are assumed absolutely bounded. Nevertheless
 maximum of the variables can grow infinitely in time. We give conditions
 for this phenomenon. It coincides with intuitive guess that the main condition
 for this growth is sufficient chaos in the initial conditions.
\end{abstract}

\section{Introduction}

Our goal is to study models of various qualitative phenomena in non
equilibrium infinite particle systems. Normally such models use stochastic
dynamics. The goal of our project is to show that completely different
approach could be chosen -- with minimum possible probability. Here
we consider the simplest deterministic example of such models. We
assume that initially the deviations from equilibrium are uniformly
bounded. Could they grow in time and how? The same problem of course
exists for finite but large number of particles, but it demands some
scaling for time, number of particles etc. In recent papers  \cite{LM_4,LM_6,LM_7} 
we considered problems related to convergence to equilibrium for finite number of particles.

We consider trajectories $x_{k}(t),k\in Z,$ for standard countable
linear chain of particles defined by the formal interaction energy
\[
U=\frac{\omega_{0}^{2}}{2}\sum_{k}(x_{k}-a_{k})^{2}+\frac{\omega_{1}^{2}}{2}\sum_{k}(x_{k+1}-x_{k}-(a_{k+1}-a_{k}))^{2}
\]
where $\omega_{1}>0,$ and 
\[
\ldots <a_{k}<a_{k+1}<\ldots 
\]
where $a_{k}\to\pm\infty$ if $k\to\pm\infty$. Normally only the
case when $a_{k}=ka$ for some $a$, is considered but we will see
that it is almost the same. What is more important, we consider here
only the case when $\omega_{0}=0$, seemingly in case $\omega_{0}>0$
less chaos is expected as any particle is tightly bounded to a fixed place.
This case will be considered elsewhere.

If we introduce deviations $q_{k}=x_{k}-a_{k}$, then the energy can
be written as
\[
U=\frac{\omega_{1}^{2}}{2}\sum_{k}(q_{k+1}-q_{k})^{2}.
\]
The equilibrium corresponds to $q_{k}\equiv0$, and we consider the
equations 
\begin{equation}
\frac{d^{2}q_{k}}{dt^{2}}=\omega_{1}^{2}(q_{k+1}-2q_{k}+q_{k-1})=\omega_{1}^{2}(\Delta q)_{k}\label{infSystem}
\end{equation}
with some initial conditions $q_{k}(0),\dot{q}_{k}(0)$.

\section{Results}

Remind the following two spaces of functions on $\mathbb{Z}$: 
\[
l_{\infty}=l_{\infty}(\mathbb{Z})=\{f:\mathbb{Z}\to\mathbb{R}:\ \sup_{k\in\mathbb{Z}}|f(k)|<\infty\},\quad |f|_{\infty}=\sup_{k\in\mathbb{Z}}|f(k)|,
\]
\[
l_{2}=l_{2}(\mathbb{Z})=\Bigl\{f:\mathbb{Z}\to\mathbb{R}:\ \sum_{k\in\mathbb{Z}}|f(k)|^{2}<\infty\Bigr\},\quad |f|_{2}=\sqrt{\sum_{k\in Z}|f(k)|^{2}}.
\]
Put $p(t)=\dot{q}(t)$.

\begin{proposition} \label{l2bound} Assume $q(0)\in l_{2}(\mathbb{Z}),\ p(0)=0$,
then 
\[
|q(t)|_{\infty}\leqslant|q(0)|_{2}.
\]
\end{proposition}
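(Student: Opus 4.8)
The plan is to diagonalize the dynamics by the discrete Fourier transform and then estimate $q_k(t)$ pointwise by the Cauchy--Schwarz inequality. Since the discrete Laplacian $\Delta$ is bounded and self-adjoint on $l_{2}(\mathbb{Z})$, the Cauchy problem \eqref{infSystem} with $q(0)\in l_{2}$, $p(0)=0$ has a unique $l_{2}$-valued solution, and it is most transparent to write it via the Fourier transform $\hat f(\theta)=\sum_{k\in\mathbb{Z}}f(k)e^{ik\theta}$, $\theta\in(-\pi,\pi]$. A shift computation gives $\widehat{(\Delta f)}(\theta)=(2\cos\theta-2)\hat f(\theta)$, so in Fourier variables \eqref{infSystem} becomes the decoupled family of scalar oscillators $\ddot{\hat q}(\theta,t)=-\Omega(\theta)^{2}\hat q(\theta,t)$ with $\Omega(\theta)=2\omega_{1}|\sin(\theta/2)|$. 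Using $p(0)=\dot q(0)=0$, the solution is simply
\[
\hat q(\theta,t)=\cos\bigl(\Omega(\theta)t\bigr)\,\hat q(\theta,0).
\]
In particular $|\hat q(\theta,t)|\le|\hat q(\theta,0)|$ pointwise, so $q(t)\in l_{2}$ for every $t$ and the $L^{2}$ norm of $\hat q$ does not increase; this is what legitimizes the manipulations below.

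Now invert. By Fourier inversion,
\[
q_k(t)=\frac{1}{2\pi}\int_{-\pi}^{\pi}\cos\bigl(\Omega(\theta)t\bigr)\,\hat q(\theta,0)\,e^{-ik\theta}\,d\theta .
\]
Apply Cauchy--Schwarz in $L^{2}(-\pi,\pi)$ to the factors $\hat q(\theta,0)$ and $\cos(\Omega(\theta)t)e^{-ik\theta}$. The second factor has modulus $\le 1$ for every $\theta$, hence $L^{2}$ norm $\le\sqrt{2\pi}$, while by Parseval $\int_{-\pi}^{\pi}|\hat q(\theta,0)|^{2}\,d\theta=2\pi|q(0)|_{2}^{2}$. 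Therefore
\[
|q_k(t)|\le\frac{1}{2\pi}\cdot\sqrt{2\pi}\,|q(0)|_{2}\cdot\sqrt{2\pi}=|q(0)|_{2}
\]
for every $k\in\mathbb{Z}$, and taking the supremum over $k$ yields $|q(t)|_{\infty}\le|q(0)|_{2}$.

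Equivalently, one can phrase the whole argument on $\mathbb{Z}$ without mentioning $L^{2}(-\pi,\pi)$: the solution is the convolution $q_k(t)=\sum_{j}G_{k-j}(t)q_j(0)$, where $G_n(t)=\frac{1}{2\pi}\int_{-\pi}^{\pi}\cos(\Omega(\theta)t)e^{-in\theta}\,d\theta$ is the inverse Fourier transform of $\theta\mapsto\cos(\Omega(\theta)t)$; then Cauchy--Schwarz on $\mathbb{Z}$ together with the Parseval identity $\sum_n G_n(t)^{2}=\frac{1}{2\pi}\int_{-\pi}^{\pi}\cos^{2}(\Omega(\theta)t)\,d\theta\le 1$ gives $|q_k(t)|\le|q(0)|_{2}$ directly. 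The only point that genuinely requires care --- and the one I would expect to be the main, though routine, obstacle --- is the bookkeeping that justifies these formal steps for general $l_{2}$ data: existence, uniqueness and $l_{2}$-boundedness of the solution, the validity of Fourier inversion, and the interchange of summation and integration. None of these is serious because $\Delta$ is a bounded operator on $l_{2}$ and the Fourier multiplier $\cos(\Omega(\cdot)t)$ is bounded by $1$; the inequality itself is then immediate from Cauchy--Schwarz and Parseval.
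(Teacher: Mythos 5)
Your proof is correct and is essentially the paper's own argument: the paper writes the solution as the convolution $q_n(t)=\sum_k a_k(t)q_{n-k}(0)$ with $a_k(t)$ the Fourier coefficients of $\cos(2\omega_1 t\sin(\lambda/2))$, then applies Cauchy--Schwarz and Parseval to get $\sum_k|a_k(t)|^2=\frac{1}{2\pi}\int_0^{2\pi}\cos^2(2\omega_1 t\sin(\lambda/2))\,d\lambda\leqslant 1$. Your second, ``on $\mathbb{Z}$'' phrasing is literally that proof, and your Fourier-side version is the same estimate transported by Parseval.
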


Thus here the solution is uniformly  bounded (in particle's index $k$ and
time $t$). The situation drastically changes if the initial
conditions belong to $l_{\infty}$. The next statements show that
growth cannot exceed the order $\sqrt{t}$ and that there exist initial
conditions with this order of growth.

\begin{theorem} \label{thSolBounds} The following statements hold: 
\begin{enumerate}
\item Let $q(0)\in l_{\infty}(\mathbb{Z}),\ p(0)=0$, then for any $t\geqslant0$
the following inequality holds: 
\[
|q(t)|_{\infty}\leqslant\left(\sqrt{2\gamma\omega_{1}t}+2\right)|q(0)|_{\infty},
\]
where $\gamma>0$ is the unique positive root of the equation: 
\[
\frac{1}{\gamma}e^{1/ \gamma}=\frac{1}{e}.
\]
\item For any $n\in\mathbb{Z}$ there exist: a constant $c>0$, initial conditions
$q(0)\in l_{\infty}(\mathbb{Z}),\ p(0)=0$ and an increasing sequence
of time moments $t_{1}<t_{2}<\ldots,\ t_{k}\rightarrow\infty$ as
$k\rightarrow\infty$, such that 
\[
q_{n}(t_{2k})\geqslant c\sqrt{t_{2k}},\quad q_{n}(t_{2k+1})\leqslant-c\sqrt{t_{2k+1}}
\]
for any $k=1,2,\ldots$
\end{enumerate}
\end{theorem}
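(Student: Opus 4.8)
Both assertions rest on the explicit representation of the solution. Passing to the Fourier variable $\theta\in[-\pi,\pi]$, equation \eqref{infSystem} with $p(0)=0$ gives $\widehat q(\theta,t)=\cos\big(2\omega_1 t\,|\sin(\theta/2)|\big)\widehat q(\theta,0)$, and since $\frac1{2\pi}\int_{-\pi}^{\pi}\cos\big(z|\sin(\theta/2)|\big)e^{im\theta}\,d\theta=J_{2m}(z)$ (a consequence of the standard integral representation of Bessel functions) one obtains
\[
q_k(t)=\sum_{j\in\mathbb Z}J_{2(k-j)}(2\omega_1 t)\,q_j(0).
\]
I would justify absolute convergence for merely bounded $q(0)$ from $|J_{2m}(z)|\le (z/2)^{2|m|}/(2|m|)!$, and check directly, via the recurrences $4J_n''=J_{n-2}-2J_n+J_{n+2}$ together with $J_{2m}(0)=\delta_{m0}$, $J_{2m}'(0)=0$, that this is the solution with data $q(0),\,p(0)=0$. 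By translation invariance it suffices to treat $n=0$, where, since $J_{-2j}=J_{2j}$,
\[
q_0(t)=\sum_{j\in\mathbb Z}J_{2j}(2\omega_1 t)\,q_j(0).
\]
Everything then reduces to controlling $\Sigma(z):=\sum_{j\in\mathbb Z}|J_{2j}(z)|$ and the signs of the $J_{2j}(z)$, with $z=2\omega_1 t$.

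\textbf{Proof of (1).} Immediately $|q_0(t)|\le|q(0)|_\infty\,\Sigma(2\omega_1 t)$, so it suffices to prove $\Sigma(z)\le\sqrt{\gamma z}+2$. I would split $\Sigma(z)=\sum_{|j|\le M}|J_{2j}(z)|+\sum_{|j|>M}|J_{2j}(z)|$. For the first sum apply Cauchy--Schwarz together with $\sum_j J_{2j}(z)^2=\tfrac12\big(1+J_0(2z)\big)\le1$, giving $\sum_{|j|\le M}|J_{2j}(z)|\le\sqrt{2M+1}$. For the tail use the power-series bound $|J_{2j}(z)|\le (z/2)^{2|j|}/(2|j|)!$, so that $\sum_{|j|>M}|J_{2j}(z)|\le 2\sum_{n>2M}(z/2)^n/n!$; a Stirling estimate shows this is $\le2$ (in fact $o(1)$) once $2M$ is past the point where consecutive terms start decaying geometrically, i.e. for $M$ of order $z$. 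Taking the least admissible $M$ and keeping track of the Stirling constant that governs this crossover produces a bound of the exact shape $\sqrt{\gamma z}+2$ with $\gamma$ the positive root of $\gamma^{-1}e^{1/\gamma}=e^{-1}$. (One may instead bound the window sum by the classical oscillatory estimate $|J_n(z)|\le\sqrt{2/(\pi\sqrt{z^2-n^2})}$ and compare with an integral; I would choose whichever keeps the crossover bookkeeping cleanest.)

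\textbf{Proof of (2).} First I establish the \emph{lower} bound $\Sigma(z)\ge c_0\sqrt z$ for all large $z$. Writing $g(\phi):=\sum_j J_{2j}(z)e^{ij\phi}=\cos\big(z\sin(\phi/2)\big)$, we have $\Sigma(z)=\|\widehat g\|_{\ell^1}\ge|\langle g,h\rangle_{L^2}|\,/\,\|\widehat h\|_{\ell^\infty}$ for every $h$. Take $h(\phi)=\chi(\phi)e^{iz\sin(\phi/2)}$ with $\chi$ a fixed smooth bump supported in $[\pi/2,3\pi/4]$, a region where $\partial_\phi\big(z\sin(\phi/2)\big)\asymp z$ and $|\partial_\phi^2\big(z\sin(\phi/2)\big)|\asymp z$. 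Then $\langle g,h\rangle_{L^2}=\tfrac1{4\pi}\int\chi+\tfrac1{4\pi}\int\chi(\phi)e^{-2iz\sin(\phi/2)}\,d\phi$, whose second term is $O(z^{-1})$ by non-stationary phase, so $\langle g,h\rangle_{L^2}\to\tfrac1{4\pi}\|\chi\|_1>0$; while van der Corput's lemma (second-derivative test) gives $|\widehat h(m)|\le C z^{-1/2}$ \emph{uniformly} in $m$. Hence $\Sigma(z)\ge c_0\sqrt z$. Note also that $\sup_{|a|_\infty\le1}\sum_j J_{2j}(z)a_j=\Sigma(z)$, attained at $a_j=\operatorname{sgn}J_{2j}(z)$ — this dictates the signs below. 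Now fix a doubly-exponential sequence $R_k=2^{2^k}$ (so $R_{k-1}=o(R_k^{5/6})$ and $\sum_{l<k}R_l=O(R_{k-1})$), put $t_k:=R_k/\omega_1$, $z_k:=2\omega_1 t_k=2R_k$, and define the fixed bounded data by generations: on the shell $R_{k-1}<|j|\le R_k$ set $q_j(0):=(-1)^k\operatorname{sgn}J_{2j}(z_k)$, and $p(0)=0$. Then
\[
q_0(t_k)=(-1)^k\!\!\sum_{R_{k-1}<|j|\le R_k}\!\!|J_{2j}(z_k)|\;+\;\sum_{l\ne k}\;\sum_{R_{l-1}<|j|\le R_l}J_{2j}(z_k)\,q_j(0).
\]
The main term is $(-1)^k S_k$ with $S_k=\Sigma(z_k)-\big(\text{mass over }|j|\le R_{k-1}\big)-\big(\text{mass over }|j|>R_k\big)$; the first subtracted piece is $\le(2R_{k-1}+1)\max_j|J_{2j}(z_k)|\lesssim R_{k-1}z_k^{-1/3}=o(\sqrt{z_k})$ by the uniform bound $|J_n(x)|\lesssim x^{-1/3}$, and the second (which sits beyond the turning point $2j\approx z_k$) is $\le\sum_{n>z_k}|J_n(z_k)|=O(z_k^{1/3})$, using that uniform bound in the Airy band of width $z_k^{1/3}$ and then $|J_n(x)|\lesssim e^{-c(n-x)^{3/2}/\sqrt x}$ for $n>x$; hence $S_k\ge\tfrac{c_0}{2}\sqrt{z_k}$ for large $k$. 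The interference is bounded by the same crude estimates: generations $l>k$ lie at indices $2j>z_k$, contributing $\le\sum_{n>z_k}|J_n(z_k)|=O(z_k^{1/3})$; generations $l<k$ lie at indices $2j\le 2R_{k-1}\ll z_k$, each term $\lesssim z_k^{-1/3}$, so their total is $\lesssim z_k^{-1/3}\sum_{l<k}R_l\lesssim R_{k-1}z_k^{-1/3}=o(\sqrt{z_k})$. Altogether $q_0(t_k)\ge c\sqrt{t_k}$ for $k$ even and $q_0(t_k)\le-c\sqrt{t_k}$ for $k$ odd, with a suitable $c>0$; translating by $n$ and keeping the two subsequences $(t_{2k})$, $(t_{2k+1})$ gives the statement.

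\textbf{Main obstacle.} The only step that needs genuine oscillatory analysis rather than bookkeeping is the lower bound $\Sigma(z)\gtrsim\sqrt z$: a sequence with $\ell^2$ norm $O(1)$ and $\ell^\infty$ norm $O(z^{-1/3})$ has $\ell^1$ norm a priori only $\gtrsim z^{1/3}$, so no soft argument reaches $\sqrt z$, and the van der Corput / Wiener-algebra duality above is precisely what supplies the missing $z^{1/6}$. Everything else — the cross-generation interference and the exact value of $\gamma$ — follows from classical uniform Bessel-function bounds once the scales $R_k$ grow fast enough.
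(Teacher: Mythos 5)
Your argument is correct, and it reproduces the theorem with the stated constants. For part~(1) it is essentially the paper's proof in Bessel‑function clothing: your split of $\sum_j|J_{2j}(2\omega_1 t)|$ into a window of width $O(t)$ (Cauchy--Schwarz against $\sum_j J_{2j}^2(z)=\tfrac12(1+J_0(2z))\leqslant 1$) plus a super‑exponentially small tail (Poisson‑integral bound $|J_m(z)|\leqslant (z/2)^{|m|}/|m|!$ and Stirling) is exactly the paper's decomposition $q(0)=q^{2}(0)+q^{\infty}(0)$, with Proposition~\ref{l2bound} playing the role of your Cauchy--Schwarz step and Lemma~\ref{expBound} that of your tail estimate; the paper itself remarks that Lemma~\ref{expBound} follows from Poisson's integral, and with the window half‑width $\gamma\omega_1 t$ your tail condition reads $e^{1/(2\gamma)}e/(2\gamma)\leqslant 1$, which is implied by the paper's defining equation $e^{1/\gamma}e/\gamma=1$ since $s\mapsto e^{1/s+1}/s$ is decreasing, so the bookkeeping closes with the stated $\gamma$. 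For part~(2) the multi‑scale architecture (one good block of data per scale, disjoint supports at rapidly growing scales, crude uniform bounds --- Lemma~\ref{besselUniformBound} --- to kill cross‑terms) coincides with the paper's, but your route to the key quantitative input is genuinely different: the paper proves Theorem~\ref{existInitCond} via the stationary‑phase asymptotics of $J_{2k}(t)$ for $k/t$ in a fixed subinterval of $(0,1/2)$ and an equidistribution‑type count of the $\asymp t$ indices where the oscillatory cosine exceeds $1/\sqrt2$, taking $q_k(0)=1$ there; you instead lower‑bound the full $\ell^1$ mass $\Sigma(z)=\sum_j|J_{2j}(z)|\gtrsim\sqrt z$ by Fourier duality against $\chi(\phi)e^{iz\sin(\phi/2)}$ (non‑stationary phase for the pairing, van der Corput's second‑derivative test for the uniform $O(z^{-1/2})$ bound on $\widehat h$) and then take $q_j(0)=\pm\operatorname{sgn}J_{2j}(z_k)$ on shells. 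Your version avoids the explicit asymptotic expansion and the phase count, at the price of invoking turning‑point decay of $J_n(x)$ for $n>x$ to control the outer tail; it also buys something the paper's write‑up leaves implicit, namely the alternating signs, which you get cleanly from the factor $(-1)^k$ on the $k$‑th shell, and your explicit choice $R_k=2^{2^k}$ replaces the paper's inductive choice of $T_k$.
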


\begin{corollary} For any $n\in\mathbb{Z}$ there exist initial conditions
$q(0)\in l_{\infty}(\mathbb{Z}),$ $p(0)=0$ such that 
\[
\limsup_{t\rightarrow\infty}\frac{q_{n}(t)}{\sqrt{t}}=c_{1}>0,\quad\liminf_{t\rightarrow\infty}\frac{q_{n}(t)}{\sqrt{t}}=c_{2}<0
\]
for some constants $c_{1},c_{2}$. \end{corollary}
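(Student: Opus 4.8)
\emph{Proof proposal.} The plan is to obtain the corollary directly by combining the two parts of Theorem~\ref{thSolBounds}; no new construction is needed. Fix $n\in\mathbb{Z}$. First I would invoke part~2 of the theorem to produce a constant $c>0$, initial conditions $q(0)\in l_\infty(\mathbb{Z})$ with $p(0)=0$, and an increasing sequence $t_1<t_2<\cdots\to\infty$ such that $q_n(t_{2k})\ge c\sqrt{t_{2k}}$ and $q_n(t_{2k+1})\le -c\sqrt{t_{2k+1}}$ for all $k$. Dividing along the even subsequence gives $q_n(t_{2k})/\sqrt{t_{2k}}\ge c$, hence $\limsup_{t\to\infty} q_n(t)/\sqrt t\ge c>0$; dividing along the odd subsequence gives $q_n(t_{2k+1})/\sqrt{t_{2k+1}}\le -c$, hence $\liminf_{t\to\infty} q_n(t)/\sqrt t\le -c<0$.

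Next I would apply part~1 of the theorem to the \emph{same} initial data. For every $t\ge 1$,
\[
|q_n(t)|\le|q(t)|_\infty\le\bigl(\sqrt{2\gamma\omega_1 t}+2\bigr)|q(0)|_\infty ,
\]
and note $|q(0)|_\infty>0$ since $q_n(t_2)>0$ forces $q(0)\ne 0$. Dividing by $\sqrt t$ and letting $t\to\infty$ yields
\[
\limsup_{t\to\infty}\frac{|q_n(t)|}{\sqrt t}\le\sqrt{2\gamma\omega_1}\,|q(0)|_\infty<\infty ,
\]
so both one-sided limits $\limsup_{t\to\infty} q_n(t)/\sqrt t$ and $\liminf_{t\to\infty} q_n(t)/\sqrt t$ are genuine real numbers, not $\pm\infty$. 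Setting $c_1$ and $c_2$ equal to these two values respectively, the previous paragraph shows $c_1\ge c>0$ and $c_2\le -c<0$, which is exactly the claim; in fact one also gets the two-sided pinning $c\le c_1\le\sqrt{2\gamma\omega_1}|q(0)|_\infty$ and $-\sqrt{2\gamma\omega_1}|q(0)|_\infty\le c_2\le -c$.

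In short, part~2 supplies oscillation of the correct order $\sqrt t$ and of both signs along suitable time sequences, while part~1 certifies that the normalized trajectory $q_n(t)/\sqrt t$ stays bounded, so that the limsup and liminf are honest finite constants with the asserted signs. I do not expect any real obstacle here; the only point that needs a word of care is the finiteness of the $\limsup/\liminf$, and this is precisely what part~1 of the theorem delivers.
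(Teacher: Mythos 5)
Your argument is correct and is exactly the deduction the paper intends: the corollary is stated without proof as an immediate consequence of Theorem~\ref{thSolBounds}, with part~2 giving $\limsup q_n(t)/\sqrt{t}\geqslant c>0$ and $\liminf q_n(t)/\sqrt{t}\leqslant -c<0$ along the constructed subsequences, and part~1 guaranteeing that these quantities are finite. Your added remark that the finiteness of the $\limsup/\liminf$ is the only point needing the upper bound is precisely the right observation.
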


Next theorem claims that if the initial conditions are sufficiently
random, then any $q_{n}(t)$ can be arbitrary big with $t$.

\begin{theorem} \label{202002161400} Suppose that $q_{k}(0),\ k\in\mathbb{Z},$
is a  sequence of i.i.d.\ random variables such that $Eq_{k}(0)=0,Eq_{k}^{2}(0)>0$
and $E|q_{0}(0)|^{3}<\infty$. Then for all $n\in\mathbb{Z}$ the
following formula holds 
\begin{equation}
P\bigl\{ \sup_{t\geqslant0}q_{n}(t)=+\infty\bigr\} =P\bigl\{ \inf_{t\geqslant0}q_{n}(t)=-\infty\bigr\} =1,\label{202002092003}
\end{equation}
where by $P(\cdot)$ we denote the probability of the corresponding event.
\end{theorem}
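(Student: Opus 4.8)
The plan is to combine an explicit d'Alembert-type representation of the solution with Kolmogorov's zero--one law and a multivariate central limit theorem. First I would record the solution formula: diagonalizing the discrete Laplacian by Fourier series, with dispersion relation $\omega(\theta)=2\omega_{1}|\sin(\theta/2)|$ and since $p(0)=0$, the solution is
\[
q_{n}(t)=\sum_{m\in\mathbb{Z}}q_{m}(0)\,G_{n-m}(t),\qquad G_{j}(t)=\frac{1}{2\pi}\int_{-\pi}^{\pi}\cos(\omega(\theta)t)\,e^{ij\theta}\,d\theta=J_{2j}(2\omega_{1}t),
\]
the last identity being a classical Bessel integral ($J$ = Bessel function of the first kind). (For i.i.d.\ data the series converges a.s.\ and locally uniformly in $t$, since $\sum_{j}G_{j}(t)^{2}<\infty$, $Eq_{0}(0)^{2}<\infty$, and $\sup_{t\le T}|J_{2j}(2\omega_{1}t)|$ decays superexponentially in $j$; so $t\mapsto q_{n}(t)$ has a continuous version and $\sup_{t\ge0}q_{n}(t)=\sup_{t\in\mathbb{Q},\,t\ge0}q_{n}(t)$ is a genuine random variable.) With $\sigma^{2}=Eq_{0}(0)^{2}>0$, Parseval's identity and $\cos A\cos B=\tfrac12(\cos(A-B)+\cos(A+B))$ give
\[
Eq_{n}(t)=0,\qquad Eq_{n}(s)q_{n}(t)=\sigma^{2}\sum_{j}G_{j}(s)G_{j}(t)=\frac{\sigma^{2}}{2}\bigl(J_{0}(2\omega_{1}|t-s|)+J_{0}(2\omega_{1}(t+s))\bigr),
\]
so $\operatorname{Var}q_{n}(t)=\frac{\sigma^{2}}{2}(1+J_{0}(4\omega_{1}t))\to\frac{\sigma^{2}}{2}$ as $t\to\infty$. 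I will also use $|G_{j}(t)|\le1$ and the classical bound $\sup_{j\in\mathbb{Z}}|G_{j}(t)|\le\sup_{\nu\in\mathbb{Z}}|J_{\nu}(2\omega_{1}t)|=O\bigl((\omega_{1}t)^{-1/3}\bigr)\to0$.

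Second, a zero--one law. The event $A_{+}=\{\sup_{t\ge0}q_{n}(t)=+\infty\}$ is a tail event of the i.i.d.\ family $(q_{k}(0))_{k\in\mathbb{Z}}$: changing $q_{m}(0)$ for $m$ in a finite set $F$ alters $q_{n}(t)$ by $\sum_{m\in F}\Delta_{m}G_{n-m}(t)$, a function bounded uniformly in $t$ since $|G_{j}(t)|\le1$; hence $\sup_{t\ge0}q_{n}(t)$ changes only by a bounded amount, so $A_{+}$ coincides with the corresponding event for the series with the terms $m\in F$ removed, which is measurable with respect to the other coordinates. Thus $A_{+}$, and likewise $A_{-}=\{\inf_{t\ge0}q_{n}(t)=-\infty\}$, lies in $\bigcap_{N}\sigma(q_{k}(0):|k|>N)$, so by Kolmogorov's zero--one law $P(A_{+}),P(A_{-})\in\{0,1\}$. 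It remains to exclude the value $0$.

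Third, I argue by contradiction. Suppose $P(A_{+})=0$; then $\sup_{t\ge0}q_{n}(t)<\infty$ a.s., so there is $M>0$ with $P(\sup_{t\ge0}q_{n}(t)\le M)\ge\tfrac12$, and therefore $P(q_{n}(t_{i})\le M,\ i=1,\dots,K)\ge\tfrac12$ for every $0\le t_{1}<\dots<t_{K}$. Now specialize to $t_{i}=(i+1)\lambda$ and let $\lambda\to\infty$. By the covariance formula, since $(j-i)\lambda$, $(i+j+2)\lambda$ and $2t_{i}$ all tend to $\infty$, the covariance matrix of $\bigl(q_{n}(t_{1}),\dots,q_{n}(t_{K})\bigr)$ converges to $\frac{\sigma^{2}}{2}I_{K}$. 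Writing this vector, for fixed $\lambda$, as $\sum_{m\in\mathbb{Z}}q_{m}(0)\bigl(G_{n-m}(t_{1}),\dots,G_{n-m}(t_{K})\bigr)$ (independent over $m$), the Lindeberg condition (equivalently Lyapunov, using $E|q_{0}(0)|^{3}<\infty$) holds because $\sup_{j}\max_{i}|G_{j}(t_{i})|\to0$ while $\sum_{m}\max_{i}|G_{n-m}(t_{i})|^{2}\le\sum_{i}\sum_{j}|G_{j}(t_{i})|^{2}\le K$; hence by the multivariate CLT (or the Cram\'er--Wold device and the scalar CLT) $\bigl(q_{n}(t_{1}),\dots,q_{n}(t_{K})\bigr)$ converges in distribution to $K$ independent $N(0,\sigma^{2}/2)$ variables. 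Consequently $P(q_{n}(t_{i})\le M,\ i=1,\dots,K)\to\Phi\bigl(M\sqrt2/\sigma\bigr)^{K}$ as $\lambda\to\infty$; choosing $K$ so large that $\Phi(M\sqrt2/\sigma)^{K}<\tfrac12$ contradicts the lower bound $\tfrac12$ valid for all $\lambda$. Hence $P(A_{+})=1$, and the same argument with $\le M$ replaced by $\ge-M$ gives $P(A_{-})=1$; together these yield \eqref{202002092003}.

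I expect the main obstacle to be the uniform-in-$j$ decay $\sup_{j\in\mathbb{Z}}|G_{j}(t)|\to0$, which is exactly what makes the central limit theorem apply. Derived from scratch it requires a stationary-phase analysis of $\int_{-\pi}^{\pi}\cos(\omega(\theta)t)e^{ij\theta}\,d\theta$ handling the degenerate (cubic) critical point of the phase at $\theta=0$ through van der Corput's lemma for the third derivative, which yields only $t^{-1/3}$ decay rather than the $t^{-1/2}$ available elsewhere; if one simply invokes the classical estimate $\sup_{\nu}|J_{\nu}(x)|=O(x^{-1/3})$, the step is immediate. The remaining ingredients — the d'Alembert formula, the zero--one law, and the triangular-array CLT — are routine; one minor point worth checking carefully is the a.s.\ convergence and continuity in $t$ of the defining series when the i.i.d.\ initial data are unbounded.
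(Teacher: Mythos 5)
Your proposal is correct, and it shares the paper's analytic core --- the Neumann/Bessel representation $q_n(t)=\sum_m q_m(0)J_{2(n-m)}(2\omega_1 t)$, the addition theorem $\sum_j J_{2j}(s)J_{2j}(t)=\tfrac12(J_0(t-s)+J_0(t+s))$ giving asymptotic decorrelation of $q_n$ at widely separated times, the uniform bound $\sup_\nu|J_\nu(x)|=O(x^{-1/3})$, and a Lyapunov-type CLT for the linear statistics (this is exactly the paper's Lemma \ref{convergenceLemma}, proved there by the same third-moment estimate on characteristic functions). Where you genuinely diverge is the concluding probabilistic step. The paper proves $P\{\sup_{t\ge0}q_0(t)\ge a\}=1$ directly for every fixed $a$: it takes $N$ nearly uncorrelated times, passes to the Gaussian limit, and then invokes a quantitative Gaussian comparison lemma (Lemma \ref{202002171147}, proved via Gershgorin bounds on the covariance matrix) to get $P\{\max_k X(s_k)<a\}\le p^N$ with $p<1$. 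You instead observe that $\{\sup_t q_n(t)=+\infty\}$ is a tail event of the i.i.d.\ initial data (because $|J_{2j}|\le1$ makes the contribution of finitely many coordinates uniformly bounded in $t$), apply Kolmogorov's zero--one law, and rule out probability $0$ by a soft contradiction: if $\sup_t q_n(t)\le M$ with probability $\ge\tfrac12$, then the joint law of $q_n$ at $K$ widely separated times, which converges to a product of $N(0,\sigma^2/2)$, forces $P\le\Phi(M\sqrt2/\sigma)^K<\tfrac12$ for large $K$. Your route buys a cleaner finish --- no explicit Gaussian density or eigenvalue estimates, only convergence of finite-dimensional distributions to an \emph{exactly} independent limit --- at the cost of the zero--one-law bookkeeping (tail measurability, path continuity of the series for unbounded i.i.d.\ data), which you correctly flag and handle. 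The paper's route is more quantitative and sidesteps the zero--one law entirely. Both are valid.
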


Next theorem explains in some way the formal nature of the previous
results.

The following operator on $l_{\infty}$ 
\[
(Vq)_{k}=-\omega_{1}^{2}(\Delta q)_{k}=-\omega_{1}^{2}(q_{k+1}-2q_{k}+q_{k-1})
\]
is bounded, namely $|V|_{\infty}\leqslant4\omega_{1}^{2}$. Then the
following operator is also bounded on $l_{\infty}$: 
\begin{equation}
C(t)=\cos(t\sqrt{V})=\sum_{k=0}^{\infty}(-1)^{k}\frac{t^{2k}V^{k}}{(2k)!}.\label{cosDef}
\end{equation}

From (the proofs of) theorems \ref{thSolBounds} and \ref{existInitCond}
the following statement follows.

\begin{theorem} \label{thCosNormBounds} There exist constants $a,b>0$
such that for all $t\geqslant0$ the following inequalities hold:
\[
a\sqrt{t}+1\leqslant\left|\cos(t\sqrt{V})\right|_{\infty}\leqslant b\sqrt{t}+1.
\]
\end{theorem}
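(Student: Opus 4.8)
The plan is to read both inequalities off the trajectory bounds already in hand. The key remark is that $C(t)=\cos(t\sqrt V)$ is exactly the solution operator of \eqref{infSystem} with zero initial velocity: for $q(0)\in l_\infty(\mathbb Z)$ and $p(0)=0$ one has $q(t)=C(t)q(0)$, the series \eqref{cosDef} converging in the operator norm $|\cdot|_\infty$ uniformly on bounded time intervals since $|V|_\infty\leqslant 4\omega_1^2$. Hence
\[
\left|\cos(t\sqrt V)\right|_\infty=\sup\bigl\{\,|q(t)|_\infty\ :\ q(0)\in l_\infty(\mathbb Z),\ p(0)=0,\ |q(0)|_\infty\leqslant 1\,\bigr\},
\]
so the two sides of Theorem~\ref{thCosNormBounds} are precisely parts~1 and~2 of Theorem~\ref{thSolBounds} (the latter sharpened by Theorem~\ref{existInitCond}), restated for an operator norm instead of for individual trajectories. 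What then remains is to extract each bound and to reshape the constants into the affine form $(\text{const})\sqrt t+1$.

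For the upper bound I would use part~1 of Theorem~\ref{thSolBounds}: $|q(t)|_\infty\leqslant(\sqrt{2\gamma\omega_1 t}+2)\,|q(0)|_\infty$ for every admissible datum, so $|\cos(t\sqrt V)|_\infty\leqslant\sqrt{2\gamma\omega_1 t}+2$. Fixing any $b_1>\sqrt{2\gamma\omega_1}$, this is $\leqslant b_1\sqrt t+1$ once $t\geqslant t_0:=(b_1-\sqrt{2\gamma\omega_1})^{-2}$. On the complementary interval $[0,t_0]$ the series \eqref{cosDef} gives the crude bound $|\cos(t\sqrt V)|_\infty\leqslant\cosh(2\omega_1 t)$, and since $\cosh(2\omega_1 t)-1=O(t^2)=o(\sqrt t)$ as $t\to 0^+$, the quotient $(\cosh(2\omega_1 t)-1)/\sqrt t$ is bounded on $(0,t_0]$, say by $b_2$, so $|\cos(t\sqrt V)|_\infty\leqslant b_2\sqrt t+1$ there. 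Then $b=\max(b_1,b_2)$ works for all $t\geqslant 0$; this half is routine.

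The lower bound is where the real work lies. Part~2 of Theorem~\ref{thSolBounds}, on its own, furnishes only one unit-norm datum together with a sequence $t_k\to\infty$ along which $|q_n(t_k)|\geqslant c\sqrt{t_k}$, hence $|\cos(t_k\sqrt V)|_\infty\geqslant c\sqrt{t_k}$ only along that sequence. To get the estimate for \emph{all} large $t$ I would invoke the construction behind Theorem~\ref{existInitCond}, which for each $t$ produces a datum with $|q(0)|_\infty\leqslant 1$, $p(0)=0$, whose $n$-th coordinate reaches order $\sqrt t$ at the prescribed moment $t$; the natural candidate is a $\pm 1$ sign pattern matching the fundamental solution --- equivalently, the sign pattern of the convolution kernel of $C(t)$ --- over a window of width of order $t$, so that the relevant coordinate of $C(t)q(0)$ becomes a sum of order $t$ terms, each of size of order $t^{-1/2}$, that do not cancel. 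This yields $|\cos(t\sqrt V)|_\infty\geqslant c_1\sqrt t$ for $t$ large; combined with the trivial bound $|\cos(t\sqrt V)|_\infty\geqslant 1$ valid for every $t$ (the constant sequence $q_k\equiv 1$ is annihilated by $V$, hence fixed by $C(t)$), one obtains the stated form $a\sqrt t+1$ with $a>0$ small. The hard part is exactly this upgrade from a discrete set of good times to all $t$ at the right $\sqrt t$ rate, i.e.\ the explicit window construction of Theorem~\ref{existInitCond}, together with tracking the constants over the intermediate range of $t$. (A self-contained alternative: compute $C(t)$ by Fourier series --- its symbol is $\cos(2\omega_1 t\sin(\theta/2))$, its convolution kernel is $(J_{2|j|}(2\omega_1 t))_{j\in\mathbb Z}$ and $|\cos(t\sqrt V)|_\infty=\sum_{j}|J_{2|j|}(2\omega_1 t)|$ --- and deduce both bounds from the Debye asymptotics of Bessel functions; I keep the route above since it stays inside the already-developed framework.)
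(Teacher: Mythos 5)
Your route is the paper's route: the paper offers no separate proof of Theorem~\ref{thCosNormBounds}, only the remark that it follows from the proofs of Theorems~\ref{thSolBounds} and~\ref{existInitCond}, and you have correctly reconstructed what that means. In particular you identify the two essential points: that $|\cos(t\sqrt V)|_{\infty}$ is the supremum of $|q(t)|_{\infty}$ over unit-ball data with $p(0)=0$, and that the lower bound must come from Theorem~\ref{existInitCond} (which produces a good datum for \emph{every} $T>T_0$) rather than from part~2 of Theorem~\ref{thSolBounds} (which only gives a sequence of times). The upper-bound bookkeeping, including the $\cosh(2\omega_1 t)$ patch on a bounded interval, is fine.

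The one step that does not go through is the final gluing for the lower bound: from $|\cos(t\sqrt V)|_{\infty}\geqslant c\sqrt t$ for $t>T_0$ together with $|\cos(t\sqrt V)|_{\infty}\geqslant 1$ for all $t$ you cannot conclude $|\cos(t\sqrt V)|_{\infty}\geqslant a\sqrt t+1$ on all of $[0,\infty)$, since on $(0,T_0]$ you would need the norm to exceed $1$ by a definite amount. In fact no argument can close this gap, because the inequality as printed is false for small $t>0$: the convolution kernel of $C(t)$ is $\bigl(J_{2|k|}(2\omega_1 t)\bigr)_{k\in\mathbb Z}$, so
\[
\left|\cos(t\sqrt V)\right|_{\infty}=\sum_{k\in\mathbb Z}\bigl|J_{2|k|}(2\omega_1 t)\bigr|,
\]
and for $0<2\omega_1 t<j_{0,1}\approx 2.405$ (the first zero of $J_0$) every term is positive, so by the Jacobi--Anger identity $J_0(x)+2\sum_{k\geqslant 1}J_{2k}(x)=1$ the norm equals exactly $1$ on that whole interval. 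So the defect is in the statement of Theorem~\ref{thCosNormBounds}, not in your argument: the lower bound should be read as $\max(a\sqrt t,\,1)$, or as $a\sqrt t+1$ for $t\geqslant t_0$, and that corrected version is precisely what your proof (and the paper's implicit one) establishes. It would be cleaner to state this explicitly rather than to assert the affine form for all $t\geqslant 0$.
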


\section{Proofs}

\subsection{Various expressions for the dynamics}

\begin{lemma} The solution of the main system (\ref{infSystem})
can be written as 
\begin{equation}
q_{n}(t)=\sum_{k\in\mathbb{Z}}a_{k}(t)q_{n-k}(0)+\sum_{k\in\mathbb{Z}}b_{k}(t)p_{n-k}(0),\label{solFormula}
\end{equation}
where 
\begin{align*}
a_{k}(t)= & \ \frac{1}{2\pi}\int_{0}^{2\pi}\cos\left(2\omega_{1}t\sin\frac{\lambda}{2}\right)e^{ik\lambda}\ d\lambda,\\
b_{k}(t)= & \ \frac{1}{2\pi}\int_{0}^{2\pi}\frac{\sin\left(2\omega_{1}t\sin(\lambda / 2)\right)}{2\omega_{1}\sin(\lambda / 2)}e^{ik\lambda}\ d\lambda.
\end{align*}
\end{lemma}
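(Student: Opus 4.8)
The plan is to diagonalize the dynamics by the discrete Fourier transform on $\mathbb{Z}$. Writing $q_k(t)=\frac{1}{2\pi}\int_0^{2\pi}\widehat q(\lambda,t)e^{ik\lambda}\,d\lambda$, the operator $q\mapsto(\Delta q)_k=q_{k+1}-2q_k+q_{k-1}$ becomes multiplication by $e^{i\lambda}-2+e^{-i\lambda}=-4\sin^2(\lambda/2)$, so (\ref{infSystem}) turns into the decoupled family of harmonic oscillators $\partial_t^2\widehat q(\lambda,t)=-4\omega_1^2\sin^2(\lambda/2)\,\widehat q(\lambda,t)$, one for each $\lambda$, with frequency $\Omega(\lambda)=2\omega_1|\sin(\lambda/2)|$. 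Solving each oscillator with data $\widehat q(\lambda,0),\widehat p(\lambda,0)$ gives $\widehat q(\lambda,t)=\cos(\Omega(\lambda)t)\,\widehat q(\lambda,0)+\Omega(\lambda)^{-1}\sin(\Omega(\lambda)t)\,\widehat p(\lambda,0)$; since $\cos$ is even and $x\mapsto\sin(x)/x$ is even, the absolute value may be dropped, and on $[0,2\pi]$ one simply takes $\Omega(\lambda)=2\omega_1\sin(\lambda/2)$. Substituting $\widehat q(\lambda,0)=\sum_m q_m(0)e^{-im\lambda}$ (and similarly for $p$), inverting the transform, and collecting the coefficient of $q_{n-k}(0)$, respectively $p_{n-k}(0)$, produces exactly the kernels $a_k(t),b_k(t)$ in the statement.

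To turn this formal computation into a proof valid for merely bounded initial data, I would argue by construction and uniqueness rather than by transforming an $l_\infty$ sequence (whose transform is only a distribution). First, analyze the kernels themselves: $\lambda\mapsto\cos(2\omega_1 t\sin(\lambda/2))$ and $\lambda\mapsto\sin(2\omega_1 t\sin(\lambda/2))/(2\omega_1\sin(\lambda/2))$ are, thanks to the evenness just noted and the anti-periodicity $\sin((\lambda+2\pi)/2)=-\sin(\lambda/2)$, smooth $2\pi$-periodic functions of $\lambda$; hence $a_k(t)$ and $b_k(t)$ decay in $k$ faster than any power of $|k|$, locally uniformly in $t$. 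Consequently, for $q(0),p(0)\in l_\infty(\mathbb{Z})$ the right-hand side of (\ref{solFormula}) is an absolutely convergent series defining a function $q_n(t)$ that is $C^\infty$ in $t$, with $t$-derivatives obtained by differentiating term by term.

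It then remains to check that this $q_n(t)$ solves (\ref{infSystem}) with the prescribed data. From the integral formulas, differentiating twice in $t$ under the integral and using $\partial_t^2\cos(\Omega t)=-\Omega^2\cos(\Omega t)$ together with $\Omega^2=4\omega_1^2\sin^2(\lambda/2)=-\omega_1^2(e^{i\lambda}-2+e^{-i\lambda})$, one obtains $\ddot a_k(t)=\omega_1^2(a_{k+1}(t)-2a_k(t)+a_{k-1}(t))$ and likewise for $b_k$; also $a_k(0)=\delta_{k0}$, $\dot a_k(0)=0$, $b_k(0)=0$, $\dot b_k(0)=\delta_{k0}$ by direct evaluation of the integrals. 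Differentiating (\ref{solFormula}) twice in $t$, inserting these recursions, and reindexing the sums yields $\ddot q_n(t)=\omega_1^2(\Delta q(t))_n$, while the initial values follow from the data just computed. Finally, since $V$ is bounded on $l_\infty$, the linear equation $\ddot q=-Vq$ has a unique solution in $C^2([0,\infty);l_\infty)$ for any initial data in $l_\infty\times l_\infty$ (standard Banach-space ODE theory), so the constructed $q_n(t)$ is the solution.

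The main obstacle is the rigorous handling of $l_\infty$: the transform step is only heuristic there, so the decisive points are (i) proving the rapid decay of $a_k(t),b_k(t)$ — which is what legitimizes the series in (\ref{solFormula}) and the term-by-term differentiation — and (ii) invoking uniqueness to identify the construction with the actual dynamics. Everything else (the recursions for $a_k,b_k$, the reindexing of the convolutions) is routine once the decay estimate is in hand.
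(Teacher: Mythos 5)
Your proof is correct and follows essentially the same route as the paper: verify that the kernels $a_k,b_k$ satisfy the lattice equation (via $4\sin^2(\lambda/2)=-(e^{i\lambda}-2+e^{-i\lambda})$) with $\delta$-type initial data, and justify convergence of the convolution series through the decay of Fourier coefficients of the smooth $2\pi$-periodic symbols — the paper obtains the decay by integrating by parts twice ($|a_k(t)|\leqslant c/k^2$), which is the concrete form of your smoothness argument. Your only addition is to make explicit the uniqueness step for $\ddot q=-Vq$ with $V$ bounded on $l_\infty$, which the paper leaves implicit; that is a welcome clarification but not a different approach.
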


\begin{proof} Let us prove that infinite vectors $a=\{a_{k}\}$ and $b$
satisfy the system (\ref{infSystem}). We have 
\[
\ddot{a}_{k}(t)=-\frac{\omega_{1}^{2}}{2\pi}\int_{0}^{2\pi} \Bigl(2\sin\frac{\lambda}{2}\Bigr)^{2}\cos\Bigl(2\omega_{1}t\sin\frac{\lambda}{2}\Bigr)e^{ik\lambda}\ d\lambda.
\]
From 
\[
2\sin^{2}\frac{\lambda}{2}=1-\cos\lambda=1-\frac{e^{i\lambda}+e^{-i\lambda}}{2}
\]
it follows 
\[
\ddot{a}_{k}(t)=-\omega_{1}^{2}(2a_{k}(t)-a_{k+1}(t)-a_{k-1}(t))=\omega_{1}^{2}(\Delta a)_{k}.
\]
Similarly for $b$. Moreover, 
\[
a_{k}(0)=\dot{b}_{k}(0)=\begin{cases}
1, & k=0,\\
0, & k\ne0.
\end{cases}
\]
It follows that the series in (\ref{solFormula}) formally satisfies
the system (\ref{infSystem}) with corresponding initial conditions.
It remains to prove that it is absolutely convergent and defines the
solution from $l_{\infty}$. Integrating by parts we get: 
\begin{align*}
  2\pi a_{k}(t)&=\frac{1}{ik}\cos\Bigl(2\omega_{1}t\sin\frac{\lambda}{2}\Bigr)e^{ik\lambda}\Big|_{0}^{2\pi}+\frac{1}{ik}\int_{0}^{2\pi}\omega_{1}t\cos\frac{\lambda}{2}
            \sin\Bigl(2\omega_{1}t\sin\frac{\lambda}{2}\Bigr)e^{ik\lambda}\ d\lambda
\\
        &=\frac{\omega_{1}t}{ik}\int_{0}^{2\pi}\cos\frac{\lambda}{2}\sin\Bigl(2\omega_{1}t\sin\frac{\lambda}{2}\Bigr)e^{ik\lambda}\ d\lambda\\
  &=\frac{\omega_{1}t}{k^{2}}\int_{0}^{2\pi}\frac{d}{d\lambda}\Bigl(\cos\frac{\lambda}{2}\sin\Bigl(2\omega_{1}t\sin\frac{\lambda}{2}\Bigr)\Bigr)e^{ik\lambda}\ d\lambda.
\end{align*}
It follows that $|a_{k}(t)|\leqslant c/k^{2}$ for some constant
$c$ not depending on $k$. Similar bound exists for $b_{k}$. Thus
the series in (\ref{solFormula}) are uniformly bounded by $c(t)|q(0)|_{\infty}$
for some function $c(t)$. The lemma is thus proved.
\end{proof}

We will need the following integral presentation for Bessel functions
of integer index: 
\[
J_{n}(t)=\frac{1}{\pi}\int_{0}^{\pi}\cos(t\sin\phi-n\phi)\,  d\phi.
\]
\begin{lemma} \label{convBesselSol} Let $p(0)=0$, then we have:
\begin{equation}
q_{n}(t)=\sum_{k}^{\infty}J_{2k}(2\omega_{1}t)q_{n-k}(0).\label{202003042303}
\end{equation}
\end{lemma}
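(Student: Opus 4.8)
The plan is to deduce \eqref{202003042303} directly from the representation \eqref{solFormula} of the previous lemma. Putting $p(0)=0$ annihilates the second sum there, leaving $q_n(t)=\sum_{k}a_k(t)q_{n-k}(0)$, so it suffices to prove that for every $k\in\mathbb{Z}$
\[
a_k(t)=\frac{1}{2\pi}\int_0^{2\pi}\cos\Bigl(2\omega_1 t\sin\frac{\lambda}{2}\Bigr)e^{ik\lambda}\,d\lambda=J_{2k}(2\omega_1 t).
\]
Absolute convergence of the series and the fact that it defines an element of $l_\infty$ were already established in the preceding lemma through the bound $|a_k(t)|\leqslant c/k^2$; and since $a_{-k}(t)=a_k(t)$ and $J_{-2k}=J_{2k}$, the precise range of summation is immaterial.

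To evaluate $a_k(t)$, first substitute $\lambda=2\phi$, obtaining $a_k(t)=\frac1\pi\int_0^{\pi}\cos(2\omega_1 t\sin\phi)e^{2ik\phi}\,d\phi$. Under the reflection $\phi\mapsto\pi-\phi$ the factor $\cos(2\omega_1 t\sin\phi)$ is invariant while $\sin(2k\phi)$ changes sign, so the contribution of $i\sin(2k\phi)$ vanishes and
\[
a_k(t)=\frac1\pi\int_0^{\pi}\cos(2\omega_1 t\sin\phi)\cos(2k\phi)\,d\phi.
\]
Now apply $\cos A\cos B=\tfrac12\cos(A-B)+\tfrac12\cos(A+B)$ with $A=2\omega_1 t\sin\phi$ and $B=2k\phi$. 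The term with $A-B$ reproduces exactly $\tfrac12 J_{2k}(2\omega_1 t)$ via the quoted integral representation of the Bessel function. For the term with $A+B$, another application of $\phi\mapsto\pi-\phi$ — using that $2k\pi$ is a multiple of $2\pi$ — converts $+2k\phi$ into $-2k\phi$, so it equals the first term; adding them gives $a_k(t)=J_{2k}(2\omega_1 t)$, as claimed.

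I do not anticipate a real obstacle: the argument is entirely elementary trigonometry applied to the integral already produced in the previous lemma. The only steps deserving explicit mention are the two reflections $\phi\mapsto\pi-\phi$ (the first to discard the imaginary part, the second to fold the $A+B$ integral onto the Bessel representation) and the remark that all analytic subtleties — interchange of summation with the dynamics, membership in $l_\infty$ — have been dealt with earlier.
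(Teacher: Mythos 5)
Your proof is correct and follows essentially the same route as the paper: the same substitution $\lambda=2\phi$, the same reduction to $\cos(2k\phi)$, and the same use of the integral representation $J_n(t)=\frac1\pi\int_0^\pi\cos(t\sin\phi-n\phi)\,d\phi$. The only cosmetic difference is that you evaluate $a_k(t)$ directly (proving $J_{-2k}=J_{2k}$ by reflection along the way), whereas the paper computes $a_k(t)+a_{-k}(t)$ and disposes of $a_k(t)-a_{-k}(t)$ by a parity/realness remark — both are the same elementary computation.
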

\begin{proof} We use the equality 
\begin{align*}
  a_{k}(t)+a_{-k}(t)&=\frac{1}{\pi}\int_{0}^{2\pi}\cos\left(2\omega_{1}t\sin\frac{\lambda}{2}\right)\cos(k\lambda)\ d\lambda\\
  &=\frac{2}{\pi}\int_{0}^{\pi}\cos\left(2\omega_{1}t\sin\phi\right)\cos(2k\phi)\ d\phi
\\
                    &=\frac{1}{\pi}\int_{0}^{\pi}\left(\cos\left(2\omega_{1}t\sin\phi+2k\phi\right)+\cos\left(2\omega_{1}t\sin\phi-2k\phi\right)\right)d\phi\\
  &=J_{2k}(2\omega_{1}t)+J_{-2k}(2\omega_{1}t)=2J_{2k}(2\omega_{1}t).
\end{align*}
In the last equality we exploited the relation $J_{2k}(t)=J_{-2k}(t)$.
Using formula (\ref{solFormula}) and the fact that $a_{k}(t)-a_{-k}(t)$
is either zero or pure imaginary, we can get 
\begin{align*}
q_{n}(t)&=\sum_{k}\Bigl(\frac{1}{2}(a_{k}(t)+a_{-k}(t))+\frac{1}{2}(a_{k}(t)-a_{-k}(t))\Bigr) q_{n-k}(0)
\\
&=\sum_{k}\frac{1}{2}\left(a_{k}(t)+a_{-k}(t)\right)q_{n-k}(0)=\sum_{k}^{\infty}J_{2k}(2\omega_{1}t)q_{n-k}(0)
\end{align*}
Lemma \ref{convBesselSol} is proved. \end{proof}

Note that (\ref{202003042303}) is the Neumann series (see \cite{Watson})
for the solution $q_{n}(t)$ with coefficients determined by initial
conditions.

\begin{lemma} \label{cosSolPresen} The following formula holds:
\[
q(t)=C(t)q(0)+S(t)p(0),
\]
where $C(t)$ is defined in (\ref{cosDef}), and for any $t$ 
\[
S(t)=\sum_{k=0}^{\infty}(-1)^{k}\frac{t^{2k+1}}{(2k+1)!}V^{k}
\]
is the operator in $l_{\infty}$, continuous in $t$.
\end{lemma}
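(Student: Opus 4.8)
The plan is to regard $C(t)$ and $S(t)$ as operator-valued power series in the Banach algebra $\mathcal B(l_\infty)$ of bounded operators on $l_\infty$, to establish their norm convergence and smooth dependence on $t$ by one crude domination, and then to extract the representation from uniqueness for the Cauchy problem on a Banach space. First I would set $M:=4\omega_1^2$, so that $|V|_\infty\leqslant M$ and hence $|V^k|_\infty\leqslant M^k$ for every $k$. Then for each fixed $t$
\[
\sum_{k=0}^\infty\Bigl|\frac{t^{2k+1}}{(2k+1)!}V^k\Bigr|_\infty\leqslant\frac1{\sqrt M}\sum_{k=0}^\infty\frac{(t\sqrt M)^{2k+1}}{(2k+1)!}=\frac{\sinh(t\sqrt M)}{\sqrt M},
\]
and likewise $\sum_k|t^{2k}V^k/(2k)!|_\infty\leqslant\cosh(t\sqrt M)$. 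Hence both series converge absolutely in operator norm, so $S(t)$ and $C(t)$ are bounded operators on $l_\infty$; since their partial sums are operator polynomials in $t$ and the convergence is uniform on every bounded $t$-interval, the maps $t\mapsto C(t)$ and $t\mapsto S(t)$ are continuous (in fact real-analytic) with values in $\mathcal B(l_\infty)$.

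Next I would differentiate the two series term by term. After one or two differentiations the series keep the same structure — a power of $t$ over a factorial of comparable size, times $V^k$ — and are again dominated on compact $t$-intervals by constants times $\cosh(t\sqrt M)$ or $\sinh(t\sqrt M)/\sqrt M$, so term-by-term differentiation is legitimate. This gives $C,S\in C^2([0,\infty),\mathcal B(l_\infty))$ together with $\ddot C(t)=-VC(t)$, $\ddot S(t)=-VS(t)$ and the initial values $C(0)=I$, $\dot C(0)=0$, $S(0)=0$, $\dot S(0)=I$. Consequently $u(t):=C(t)q(0)+S(t)p(0)$ is a twice continuously differentiable $l_\infty$-valued function satisfying $\ddot u(t)=-Vu(t)=\omega_1^2\Delta u(t)$, $u(0)=q(0)$, $\dot u(0)=p(0)$, i.e.\ it solves (\ref{infSystem}) with the given initial data.

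It then remains to identify $u$ with the solution $q(t)$ of (\ref{infSystem}). For this I would rewrite (\ref{infSystem}) as a first-order linear equation $\dot w=\mathcal A w$ on $l_\infty\oplus l_\infty$ with $w=(q,p)$ and $\mathcal A(q,p)=(p,\omega_1^2\Delta q)$; since $\mathcal A$ is bounded, the classical theorem on linear ODEs with bounded generator yields a unique continuously differentiable solution $w(t)=e^{t\mathcal A}w(0)$ for each initial vector, and splitting the exponential series into even and odd powers of $t$ (using $\mathcal A^2=\operatorname{diag}(-V,-V)$, whence $\mathcal A^{2k}=(-1)^k\operatorname{diag}(V^k,V^k)$) shows that the first component of $e^{t\mathcal A}w(0)$ is exactly $C(t)q(0)+S(t)p(0)$. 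Since $q(t)$ — equivalently, the function given by formula (\ref{solFormula}) — is this unique solution, we obtain $q(t)=C(t)q(0)+S(t)p(0)$. Everything in the first two paragraphs is routine domination and term-by-term differentiation; the one step deserving care is the uniqueness input (equivalently, checking that the Neumann–Bessel series of Lemma~\ref{convBesselSol} is the term-by-term expansion in $t$ of $C(t)q(0)$), so I would present that identification as the main point.
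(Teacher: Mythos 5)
Your proof is correct and follows essentially the same route as the paper: term-by-term differentiation of the operator power series (justified by the norm bound $|V|_{\infty}\leqslant 4\omega_{1}^{2}$) to verify $\ddot{C}(t)=-VC(t)$, $\ddot{S}(t)=-VS(t)$ with $C(0)=E$, $\dot{C}(0)=0$, $S(0)=0$, $\dot{S}(0)=E$. The paper stops there and leaves the identification with $q(t)$ implicit; your added uniqueness step via the bounded first-order generator $\mathcal{A}$ on $l_{\infty}\oplus l_{\infty}$, with $\mathcal{A}^{2}=\mathrm{diag}(-V,-V)$, correctly closes that gap and is, if anything, more complete than the paper's argument.
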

\begin{proof} It is clear that the power series in $t$ for $C(t)$ and $S(t)$
can be differentiated term by term, and we have the equalities: 
\begin{align*}
\ddot{C}(t)&=\sum_{k=1}^{\infty}(-1)^{k}\frac{t^{2k-2}V^{k}}{(2k-2)!}=-VC(t),\quad C(0)=E,
\\
\ddot{S}(t)&=\sum_{k=1}^{\infty}(-1)^{k}\frac{t^{2k-1}}{(2k-1)!}V^{k}=-VS(t),\quad S(0)=0,
\end{align*}
where $E$ is the unit operator.
\end{proof} %%%%%%%%% ?????????

\begin{lemma} \label{expBound} Let $q(0)\in l_{\infty},\, p(0)=0$,
and moreover $q_{n}(0)=0$ for all $|n|<M$ for some $M$. Then for
any $t\geqslant0$ the following inequality holds: 
\[
|q_{0}(t)|\leqslant\left(e^{\alpha+1}\alpha\right)^{2M}|q(0)|_{\infty},\quad \alpha=\omega_{1}\frac{t}{M}.%e^{2\omega_{1}t}\frac{(2\omega_{1}t)^{2M}}{(2M)!}|q(0)|_{\infty}
\]
\end{lemma}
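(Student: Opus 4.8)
The plan is to extract $q_0(t)$ directly from the Neumann-series representation of Lemma~\ref{convBesselSol}. With $p(0)=0$ that lemma writes $q_0(t)$ as a sum of terms $J_{2k}(2\omega_1 t)\,q_{-k}(0)$ over $k\in\mathbb Z$; the hypothesis $q_j(0)=0$ for $|j|<M$ kills every term with $|k|<M$, and pairing $k$ with $-k$ and invoking $J_{2k}=J_{-2k}$ leaves
\[
|q_0(t)|\le 2\,|q(0)|_\infty\sum_{k\ge M}\bigl|J_{2k}(2\omega_1 t)\bigr| .
\]
So everything reduces to a tail estimate for Bessel functions of large even order.

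The analytic ingredient I would use is the elementary bound $|J_n(x)|\le (x/2)^n/n!$, valid for all integers $n\ge0$ and all $x\ge0$. This follows at once from Poisson's integral representation
\[
J_n(x)=\frac{(x/2)^n}{\Gamma(1/2)\,\Gamma(n+1/2)}\int_{-1}^{1}(1-s^2)^{\,n-1/2}\cos(xs)\,ds
\]
by estimating $|\cos(xs)|\le 1$ and recognising the remaining integral as the Beta integral $\Gamma(1/2)\Gamma(n+1/2)/\Gamma(n+1)$ (alternatively one may simply cite \cite{Watson}). In particular the series in the previous display converges absolutely, so all the rearrangements above are legitimate.

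With $x=2\omega_1 t$, so that $x/2=\omega_1 t=\alpha M$, the tail becomes $\sum_{k\ge M}(\omega_1 t)^{2k}/(2k)!$. Factoring out the leading term and using $(2M)!/(2M+2i)!\le 1/(2i)!$ together with $\cosh\le\exp$, then Stirling in the crude form $m!\ge (m/e)^m$, I obtain
\[
\sum_{k\ge M}\frac{(\omega_1 t)^{2k}}{(2k)!}\le\frac{(\omega_1 t)^{2M}}{(2M)!}\,\cosh(\omega_1 t)\le\Bigl(\frac{e\alpha}{2}\Bigr)^{2M}e^{\alpha M}=\Bigl(\frac{e^{-\alpha/2}}{2}\Bigr)^{2M}\bigl(\alpha e^{\alpha+1}\bigr)^{2M}\le\frac14\bigl(\alpha e^{\alpha+1}\bigr)^{2M},
\]
the last step because $e^{-\alpha/2}/2\le 1/2$ and $M\ge1$. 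Combining this with the first display yields $|q_0(t)|\le\frac12\bigl(\alpha e^{\alpha+1}\bigr)^{2M}|q(0)|_\infty$, which is even slightly stronger than the asserted inequality.

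The only non-bookkeeping point is the Bessel estimate $|J_n(x)|\le (x/2)^n/n!$; once that is in hand (Poisson's formula, or \cite{Watson}), the rest is Stirling plus summation of a rapidly convergent tail, and the exponent $(e^{\alpha+1}\alpha)^{2M}$ is precisely what this chain of crude inequalities produces.
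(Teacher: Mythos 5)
Your proof is correct, and it follows the route that the paper explicitly mentions and then deliberately declines to take: the authors remark that the lemma ``easily follows from formula (\ref{202003042303}) and classical bounds for the Bessel functions (for example, from Poisson's integral),'' but instead argue through the operator series $q(t)=\cos(t\sqrt V)q(0)$ of Lemma \ref{cosSolPresen}, using a path-counting (finite propagation speed) argument to show $(V^kq(0))_0=0$ for $k<M$, so that only the tail $\sum_{k\ge M}(2\omega_1t)^{2k}/(2k)!$ survives; they state they prefer this because it generalizes to more general quadratic interactions, where no Neumann series is available. Your version replaces that locality argument by the Neumann series of Lemma \ref{convBesselSol} together with the Poisson-integral bound $|J_n(x)|\le (x/2)^n/n!$ (which you derive correctly via the Beta integral), after which both proofs reduce to the identical combinatorial tail estimate, the same use of $(2M)!/(2M+2i)!\le 1/(2i)!$, $\cosh\le\exp$, and $m!\ge(m/e)^m$; your bookkeeping is right and even yields the extra factor $\tfrac12$. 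Two small points: the factor $\tfrac14$ coming from $(e^{-\alpha/2}/2)^{2M}\le 1/4$ uses $M\ge 1$, which is needed to absorb the factor $2$ from pairing $k$ with $-k$ and is implicit in the lemma (the statement is vacuous for $M=0$); and your appeal to Poisson's integral makes the argument slightly less self-contained and less portable to general finite-range potentials than the paper's, which is the only thing the operator-theoretic proof buys.
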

\begin{proof}
  This statement easily follows from formula (\ref{202003042303})
and classical bounds for the Bessel functions (for example, from Poisson's
integral for the Bessel function, \cite{Watson}). But we will give
a proof without using a presentation of solution via the Neumann series.
So our proof can be easily generalized to more common quadratic interaction
potential.

We estimate $q(t)$ using lemma \ref{cosSolPresen}. We have: 
\[
q(t)=q(0)+\sum_{k=1}^{\infty}(-1)^{k}\frac{t^{2k}}{(2k)!}V^{k}q(0).
\]
For any $k\geqslant1$ the following holds: 
\[
(V^{k}q(0))_{0}=\sum_{i_{1},i_{2},\ldots,i_{k} \in\mathbb{Z}}V_{i_{0},i_{1}}V_{i_{1},i_{2}}\ldots V_{i_{k-1},i_{k}}q_{i_{k}}(0),\quad i_{0}=0.
\]
If $(V^{k}q(0))_{0}\ne0$ for some $k$, then one can find a sequence
$i_{1},\ldots,i_{k}$ of indices such that $|i_{r}-i_{r-1}|\leqslant1$
for all $r=1,\ldots,k$ and $|i_{k}| \geqslant M$. Thus 
\[
M \leqslant |i_{k}|=\biggl|\sum_{r=1}^{k}i_{r}-i_{r-1}\biggr|\leqslant\sum_{r=1}^{k}|i_{r}-i_{r-1}|\leqslant k.
\]
It follows that 
\[
q_{0}(t)=q_{0}(0)+\sum_{k=M}^{\infty}(-1)^{k}\frac{t^{2k}}{(2k)!}(V^{k}q(0))_{0}.
\]
Then for the operator norm we have 
\[
|V^{k}q(0)|_{\infty}\leqslant|V|_{\infty}^{k}|q(0)|_{\infty}\leqslant(4\omega_{1}^{2})^{k}|q(0)|_{\infty}.
\]
The following inequalities hold: 
\[
|q_{0}(t)|\leqslant\sum_{k=M}^{\infty}\frac{(2\omega_{1}t)^{2k}}{(2k)!}|q(0)|_{\infty}\leqslant e^{2\omega_{1}t}\frac{(2\omega_{1}t)^{2M}}{(2M)!}|q(0)|_{\infty}.
\]
Since for any integer $n$ 
\[
n!\geqslant\left(\frac{n}{e}\right)^{n},
\]
we have
\[
|q_{0}(t)|\leqslant\Bigl(\frac{\exp\{ \omega_{1}t / M \} \omega_{1}te}{M}\Bigr)^{2M}|q(0)|_{\infty}.
\]
The lemma is proved.
\end{proof}

\paragraph{Proof of Proposition \ref{l2bound}}

As the function $\cos\left(2\omega_{1}t\sin (\lambda / 2)\right)$
is infinitely smooth in $\lambda$ and periodic with period $2\pi$,
we have
\[
\cos\Bigl(2\omega_{1}t\sin\frac{\lambda}{2}\Bigr)=\sum_{k\in\mathbb{Z}}a_{k}(t)e^{-i\lambda k}
\]
for all $\lambda\in[0,2\pi]$ and all $t\geqslant0$. By formula (\ref{solFormula})
and Parseval's equality 
\begin{align*}
  |q_{n}(t)|&\leqslant\sqrt{\sum_{k}|a_{k}(t)|^{2}}\, \sqrt{\sum_{k}q_{k}^{2}(0)}\\
  &=|q(0)|_{2}\, \sqrt{\frac{1}{2\pi}\int_{0}^{2\pi}
    \cos^{2}\Bigl(2\omega_{1}t\sin\frac{\lambda}{2}\Bigr)\ d\lambda}\leqslant|q(0)|_{2}.
\end{align*}

\subsection{Proof of Theorem \ref{thSolBounds}}
\subsubsection{Upper bound }

Equivalent statement is that for all $t\geqslant0$ and all $n\in\mathbb{Z}$
the inequality 
$$|q_{n}(t)|\leqslant\left(\sqrt{2\gamma\omega_{1}t}+2\right)|q(0)|_{\infty}$$
holds. Without loss of generality we can put $n=0$. The idea of the
proof is the following. Write the initial condition as the sum of
two terms. The first one is the restriction of $q(0)$ on finite interval
of $\mathbb{Z}$ containing $0$ and having length of order $t$.
The second one is the restriction of $q(0)$ on the remaining part of
$\mathbb{Z}$. By Proposition \ref{l2bound} the solution corresponding
to the first term is of the order $\sqrt{t}$. It remains to show
that the influence of the second term for time of the order $t$ is
small. Now the formal proof.

Fix some time moment $T>0$ and a constant $\mu>0$. Write the initial vector
as: 
\[
q(0)=q^{2}(0)+q^{\infty}(0),
\]
where 
\begin{align*}
q_{n}^{2}(0)&=\begin{cases}
q_{n}(0), & |n|\leqslant\mu T,\\
0, & |n|>\mu T, 
\end{cases} \\
 q_{n}^{\infty}(0)&=\begin{cases}
0, & |n|\leqslant\mu T,\\
q_{n}(0), & |n|>\mu T.
\end{cases}
\end{align*}
Let $q^{2}(t),q^{\infty}(t)$ be the solutions of the system (\ref{infSystem})
corresponding to the initial conditions $q^{2}(0),p^{2}(0)=0$ and
$q^{\infty}(0),p^{\infty}(0)=0$. For all $t\geqslant0$: 
\[
q(t)=q^{2}(t)+q^{\infty}(t).
\]
Since $q^{2}(0)\in l_{2}$, by Proposition \ref{l2bound} the following
inequality holds for all $t\geqslant0$: 
\[
|q^{2}(t)|_{\infty}\leqslant|q^{2}(0)|_{2}\leqslant|q(0)|_{\infty}\sqrt{2\mu T+1}.
\]
By Lemma \ref{expBound} we have also the inequality: 
\begin{align*}
|q_{0}^{\infty}(t)|&\leqslant\left(e^{\alpha(t)+1}\alpha(t)\right)^{2m(T)}|q(0)|_{\infty},\\ 
\alpha(t)&=\omega_{1}\frac{t}{m(T)}
\end{align*}
where $m(T)=[\mu T]+1$, and $[x]$ is the integer part of $x$. Since
$\mu T / m(T) \leqslant1$ for all $T\geqslant0$, we have
\[
\alpha(T)\leqslant\frac{\omega_{1}}{\mu}.
\]
It follows 
\[
e^{\alpha(T)+1}\alpha(T)\leqslant e^{\omega_{1} / \mu} \frac{\omega_{1}}{\mu}e.
\]
Put $\mu=\gamma\omega_{1}$, where $\gamma$ satisfies the equation  
\[
e^{1 / \gamma} \frac{1}{\gamma}e=1
\]
(and thus it is uniquely defined).
For given $\mu$ the following inequality holds: 
\[
|q_{0}^{\infty}(T)|\leqslant|q(0)|_{\infty}.
\]
Thus, we have proved that for any $T\geqslant0$ and $\mu=\gamma\omega_{1}$,
the following inequality holds: 
\[
|q(T)|_{\infty}\leqslant(\sqrt{2\mu T+1}+1)|q(0)|_{\infty}\leqslant(\sqrt{2\gamma\omega_{1}T}+2)|q(0)|_{\infty}.
\]
The proof is finished.

\subsubsection{Lower bound}

Like in the proof of item 1, without loss of generality we take $n=0$.
If $q\in l_{\infty}$, then define the support of $q$ as follows:
\[
\mathrm{supp}(q)=\{n\in\mathbb{Z}:\ q_{n}\ne0\}.
\]
The idea of the proof is also simple. Firstly we prove that for any $T$
there exists initial vector having support of the order $T$, and
moreover such that maximum of the corresponding solution is bounded from below
by $c\sqrt{T}$ for some $c$, not depending on $T$. Then, for some
increasing sequence of $T_{k}$ we sum up these initial conditions
so that their supports do not intersect. Now the formal proof.

\begin{theorem} \label{existInitCond} There is $T_{0}$ such that
for any $T>T_{0}$ there are initial conditions $q(0)\in l_{\infty},\ |q(0)|_{\infty}\leqslant1,p(0)=0$
with the following properties: 
\begin{enumerate}
\item 
$
q_{0}(T)\geqslant c\sqrt{T}
$
for some constant $c>0$ not depending on $T$;
\item $\mathrm{supp}(q(0))\subset[aT,bT]$ for some positive constants $a,b>0$
not depending on $T$. 
\end{enumerate}
\end{theorem}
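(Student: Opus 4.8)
The idea is to read off the initial data from the sign pattern of the Bessel coefficients in Lemma~\ref{convBesselSol}. Fix once and for all constants $0<a<b<\omega_{1}$ (say $a=\omega_{1}/4$, $b=\omega_{1}/2$). Given $T$, set $p(0)=0$ and
\[
q_{j}(0)=\operatorname{sgn}\bigl(J_{2j}(2\omega_{1}T)\bigr)\ \text{for }j\in[aT,bT]\cap\mathbb{Z},\qquad q_{j}(0)=0\ \text{otherwise.}
\]
Then $|q(0)|_{\infty}\leqslant1$ and $\mathrm{supp}(q(0))\subset[aT,bT]$, which is property~2. Only finitely many components are nonzero, so by Lemma~\ref{convBesselSol} and $J_{-2k}=J_{2k}$ each term of $\sum_{k}J_{2k}(2\omega_{1}T)q_{-k}(0)$ equals $|J_{2k}(2\omega_{1}T)|$, and the nonzero ones are those with $-k\in[aT,bT]$; hence
\[
q_{0}(T)=\sum_{aT\leqslant k\leqslant bT}\bigl|J_{2k}(2\omega_{1}T)\bigr|.
\]
Thus property~1 reduces to the lower bound $\sum_{aT\leqslant k\leqslant bT}|J_{2k}(2\omega_{1}T)|\geqslant c\sqrt{T}$ for large $T$.

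To obtain this I would use Debye's asymptotics for $J_{\nu}(x)$ in the oscillatory regime $\nu<x$ with $\nu/x$ in a fixed compact subinterval of $(0,1)$ (see \cite{Watson}): writing $\cos\beta=\nu/x$,
\[
J_{\nu}(x)=\sqrt{\tfrac{2}{\pi x\sin\beta}}\,\Bigl(\cos\bigl(x(\sin\beta-\beta\cos\beta)-\tfrac{\pi}{4}\bigr)+O(1/x)\Bigr),
\]
uniformly in the stated range. Apply this with $x=2\omega_{1}T$, $\nu=2k$, $k\in[aT,bT]$, so that $\cos\beta_{k}=k/(\omega_{1}T)\in[a/\omega_{1},b/\omega_{1}]$ and $\beta_{k}$ lies in a fixed compact subinterval of $(0,\pi/2)$. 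Then the amplitude $A_{k}:=\sqrt{2/(\pi\cdot 2\omega_{1}T\sin\beta_{k})}$ satisfies $c_{1}T^{-1/2}\leqslant A_{k}\leqslant C_{1}T^{-1/2}$ with $c_{1},C_{1}$ depending only on $\omega_{1},a,b$, and the error is $O(T^{-3/2})$ uniformly. With $\phi_{k}:=2\omega_{1}T(\sin\beta_{k}-\beta_{k}\cos\beta_{k})-\tfrac{\pi}{4}$ this gives
\[
\sum_{aT\leqslant k\leqslant bT}\bigl|J_{2k}(2\omega_{1}T)\bigr|\ \geqslant\ c_{1}T^{-1/2}\sum_{aT\leqslant k\leqslant bT}|\cos\phi_{k}|\ -\ O(T)\cdot O(T^{-3/2}),
\]
and since the last term is $O(T^{-1/2})$ it suffices to show $\sum_{aT\leqslant k\leqslant bT}|\cos\phi_{k}|\geqslant c_{2}T$.

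For that I would bound $|\cos\phi_{k}|\geqslant\cos^{2}\phi_{k}=\tfrac12(1+\cos2\phi_{k})$, so
\[
\sum_{aT\leqslant k\leqslant bT}|\cos\phi_{k}|\ \geqslant\ \tfrac12\,\#\{k:aT\leqslant k\leqslant bT\}\ +\ \tfrac12\operatorname{Re}\!\!\sum_{aT\leqslant k\leqslant bT}\!\!e^{2i\phi_{k}},
\]
the first term being $\tfrac12(b-a)T+O(1)$. A direct computation from $\cos\beta_{k}=k/(\omega_{1}T)$ yields $\phi_{k}'=-2\beta_{k}$ (the factor $T$ cancels), which is monotone in $k$ and, thanks to $0<a<b<\omega_{1}$, stays in a fixed compact subinterval of $(-\pi,0)$; hence $2\phi_{k}'=-4\beta_{k}$ stays bounded away from integer multiples of $2\pi$. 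The first-derivative test for exponential sums (Kusmin--Landau / van der Corput) then bounds $\bigl|\sum_{k}e^{2i\phi_{k}}\bigr|$ by a constant depending only on $\omega_{1},a,b$, so $\sum_{k}|\cos\phi_{k}|\geqslant\tfrac12(b-a)T-C$. Combining the three displays gives $q_{0}(T)\geqslant c\sqrt{T}$ for all $T\geqslant T_{0}$, with $c,T_{0}$ depending only on $\omega_{1}$.

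The main obstacle I expect is precisely this last step: extracting a lower bound of exact order $\sqrt{T}$ from a sum of $\asymp T$ oscillating terms each of size $\asymp T^{-1/2}$, i.e.\ ruling out that the $|\cos\phi_{k}|$ are simultaneously small. This rests on two things that must be handled with care: the uniformity of Debye's expansion across the whole index window $[2aT,2bT]$, and the fact that the phase increment $-2\beta_{k}$ is genuinely bounded away from $0$ and $\pi$ --- which is exactly what the choice $0<a<b<\omega_{1}$ secures. A minor alternative, avoiding Debye's formula, is to run stationary phase directly on $J_{2k}(2\omega_{1}T)=\tfrac{1}{\pi}\int_{0}^{\pi}\cos(2\omega_{1}T\sin\phi-2k\phi)\,d\phi$, which produces the same amplitude $A_{k}$ and phase $\phi_{k}$.
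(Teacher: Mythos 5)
Your proposal is correct, and it follows the paper's skeleton up to a point: both start from the Neumann/Bessel representation of Lemma~\ref{convBesselSol} with data supported in a window $[aT,bT]$, and both rely on the uniform stationary-phase (Debye) asymptotics for $J_{2k}(2\omega_{1}T)$ with $k/(\omega_{1}T)$ in a compact subinterval of $(0,1)$ --- the paper proves exactly this as its own lemma, with phase $tg(2\mu)$, $g(\mu)=\sqrt{1-\mu^{2}}-\mu\arccos\mu$, which matches your $\phi_{k}$. The divergence is in the key lower-bound step. The paper pushes the window toward the turning point $k\approx\omega_{1}T$, where the phase increment $x_{k+1}-x_{k}\approx 2g'(\nu)=-2\arccos\nu$ is a small negative number in $(-2\epsilon,-\epsilon)$; it then selects the subset $I$ of indices whose phase falls in $(0,\pi/2)$ mod $2\pi$ (a positive fraction of the window, by an elementary count), and takes $q(0)$ equal to the indicator of $I$ (Lemma~\ref{lowerBoundsumf}). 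You instead keep the whole bulk window, take $q_{j}(0)=\operatorname{sgn}J_{2j}(2\omega_{1}T)$ so that $q_{0}(T)=\sum_{aT\leqslant k\leqslant bT}|J_{2k}(2\omega_{1}T)|$, and kill the oscillation via $|\cos\phi_{k}|\geqslant\tfrac12(1+\cos 2\phi_{k})$ together with the Kusmin--Landau/van der Corput first-derivative test; this is legitimate, since $\phi_{k}'=-2\beta_{k}$ is indeed monotone and $-4\beta_{k}$ stays in a compact subset of $(-2\pi,0)$ precisely because your window keeps $k/(\omega_{1}T)$ away from both $0$ and $1$. The trade-off: the paper's subset selection is entirely elementary (no exponential-sum machinery) but requires the specific placement of the window near $\nu=1$ where the increments are small; your version works for any bulk window and yields the stronger statement that the full $\ell^{1}$ mass of the Bessel coefficients over the window is $\gtrsim\sqrt{T}$, at the cost of importing the exponential-sum bound. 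Both constructions give $|q(0)|_{\infty}\leqslant1$, support of size $\asymp T$, and constants independent of $T$, so either proves Theorem~\ref{existInitCond}.
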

\begin{proof} For this proof we need some lemmas. For the
number $\mu\in\mathbb{R}$ define the functions: 
\[
J(t,\mu)=\frac{1}{\pi}\int_{0}^{\pi}\cos\bigl(t(\sin\phi-\mu\phi) \bigr)\  d\phi .
\]

\begin{lemma} For all $|\mu|<1$  the following
asymptotic formula holds as $t\rightarrow\infty$: 
\[
J(t,\mu)=\sqrt{\frac{2\pi}{t\sqrt{1-\mu^{2}}}}\cos\Bigl(tg(\mu)-\frac{\pi}{4}\Bigr)\Bigl(1+O\Bigl(\frac{1}{t}\Bigr)\Bigr)
\]
where $O()$ is uniform in $\mu$ on any segment $[a,b]\subset(-1,1)$
% which strictly belongs to the interval $(-1,1)$
and 
\[
g(\mu)=\sqrt{1-\mu^{2}}-\mu\arccos(\mu).
\]
\end{lemma}
\begin{proof}
  This assertion is a well-known result about the
Bessel functions (see \cite{fedoruck,Watson}). Nevertheless we will
write the proof for completeness. We use stationary phase method for
the function $J(t,\mu)$ in case when the phase function contains
additional parameter (\hspace{-0.1pt}\cite{fedoruck}, p.\thinspace 107, theorem 1.6). In our
case the phase function is: 
\[
S(\phi,\mu)=\sin\phi-\mu\phi.
\]
We have the following equalities for the derivatives: 
\[
S_{\phi}(\phi,\mu)=\cos\phi-\mu,\quad S_{\phi\phi}(\phi,\mu)=-\sin\phi.
\]
It follows that $S$ has the unique critical point on $[0,\pi]$,
which for $|\mu|<1$ is equal to 
\[
\phi_{0}=\phi_{0}(\mu)=\arccos(\mu).
\]
Then 
\[
S_{\phi\phi}(\phi_{0}(\mu),\mu)=-\sqrt{1-\mu^{2}}.
\]
It follows that $\phi_{0}(\mu)$ is a non-degenerate stationary point
for $\mu\in[-1+\delta,1-\delta]$ for any $0<\delta<1$. Thus all
conditions of the mentioned theorem hold together with the following
formula: 
\[
J(t,\mu)=\frac{1}{\pi}\sqrt{\frac{2\pi}{t\sqrt{1-\mu^{2}}}}\cos\Bigl(tS(\phi_{0}(\mu),\mu)-\frac{\pi}{4}\Bigr)\Bigl(1+O\Bigl(\frac{1}{t}\Bigr)\Bigr)
\]
as $t\rightarrow\infty$, where $O()$ is uniform in $\mu\in[a,b]$
for any segment $[a,b]\subset(-1,1)$. We
have the equality: 
\[
S(\phi_{0}(\mu),\mu)=\sqrt{1-\mu^{2}}-\mu\arccos(\mu).
\]
So the lemma is proved.
\end{proof}

From the proven lemma it follows that, if $\mu t=k\in\mathbb{Z}>0$
for some $\mu$ such that $\mu< 1/2$, then: 
\[
J_{2k}(t)=J(t,2\mu)=\sqrt{\frac{2}{\pi t\sqrt{1-4\mu^{2}}}}\cos\Bigl(tg(2\mu)-\frac{\pi}{4}\Bigr)\Bigl(1+O\Bigl(\frac{1}{t}\Bigr)\Bigr),\;\, \mu=\frac{k}{t}
\]
as $t\rightarrow\infty$, where $O()$ is uniform in all $k$ belonging
to the segment $[at,bt]$ for all 
\[
0\leqslant a<b<\frac{1}{2}.
\]

Denote the main term in this asymptotic formula for $J_{2k}(t)$ by $f_{k}(t)$:
\[
f_{k}(t)=\sqrt{\frac{2}{\pi t\sqrt{1-4\mu^{2}}}}\cos\Bigl(tg(2\mu)-\frac{\pi}{4}\Bigr),\;\, \mu=\frac{k}{t}.
\]
Then 
\begin{equation}
J_{2k}(t)=f_{k}(t)\Bigl(1+O\Bigl(\frac{1}{t}\Bigr)\Bigr),\label{ckasym}
\end{equation}
where $O()$ has the properties as above.

\begin{lemma} \label{lowerBoundsumf} There exist numbers $0<a<b< 1/2$
and $t_{0}>0$ such that for any $t\geqslant t_{0}$ there is a subset
$I\subset([at,bt]\cap\mathbb{Z})$ with the property: 
\[
\sum_{k\in I}f_{k}(t)\geqslant c_{1}\sqrt{t},
\]
and moreover 
\[
|I|\leqslant c_{2}t
\]
for some positive constants $c_{1},c_{2}$ not depending on $t$.
\end{lemma}
\begin{proof} Denote 
\begin{equation}
x_{k}(t)=tg(2\mu)=tg(\nu_{k})=tg\Bigl(2\frac{k}{t}\Bigr),\;\, \nu_{k}=2\frac{k}{t}.\label{x_k}
\end{equation}
Then 
\[
f_{k}(t)=\sqrt{\frac{2}{\pi t\sqrt{1-4\mu^{2}}}}\cos\Bigl(x_{k}(t)-\frac{\pi}{4}\Bigr).
\]
To prove the lemma we need to examine the points $x_{k}(t)$ modulo
$2\pi$. It is sufficient to prove that the number of points $x_{k}$
in the interval $(0, \pi / 2)$ has order $t$.

We use the equality 
\[
g(\nu_{k+1})=g(\nu_{k})+\frac{2}{t}g'(\nu_{k})+\frac{4}{t^{2}}g''(\theta_{k}),
\]
for some $\theta_{k}\in[\nu_{k},\nu_{k+1}]$. Whence 
\[
x_{k+1}(t)=x_{k}(t)+2g'(\nu_{k})+\frac{4}{t}g''(\theta_{k}).
\]
Thus, $x_{k+1}$ and $x_{k}$ differ on some angle, which for large
$t$ equals approximately  $2g'(\nu_{k})$. Let us find the derivative
of $g$: 
\[
g'(\nu)=-\frac{\nu}{\sqrt{1-\nu^{2}}}-\arccos(\nu)+\frac{\nu}{\sqrt{1-\nu^{2}}}=-\arccos(\nu).
\]
Thus for $\nu\in(0,1)$, $g'(\nu)$ is negative and $g'(1)=0$. Fix
some small number $\epsilon$. It is clear that there exist interval
$(a,b)\subset(0,1)$ and $t_{0}$ such that for all $x,y\in(a,b)$
and all $t>t_{0}$ the following inequality holds: 
\[
-2\epsilon<2g'(x)+\frac{4}{t}g''(y)<-\epsilon.
\]
Then for all $k\in [at/2,bt/2]$ 
\begin{equation}
-2\epsilon<x_{k+1}(t)-x_{k}(t)<-\epsilon.\label{deltaXkineq}
\end{equation}

Consider the set 
\[
I=\Bigl\{k\in\Bigl[\frac{a}{2}t,\frac{b}{2}t\Bigr]:\ x_{k}(t)\in\Bigl(0,\frac{\pi}{2}\Bigr)  \mod 2\pi \Bigr\}.
\]
By (\ref{deltaXkineq}) we have: 
\[
c_{1}t\leqslant|I|\leqslant c_{2}t,
\]
for some positive constants $c_{1},c_{2}$ not depending on $t$.
Then, 
\[
\sum_{k\in I}f_{k}(t)\geqslant c_{1}\sqrt{\frac{2}{\pi t\sqrt{1-b^{2}}}}\frac{1}{\sqrt{2}}t=c_{3}\sqrt{t} ,
\]
and the lemma is proved.
\end{proof}

Now we come back to the proof of Theorem \ref{existInitCond}. Using
Lemma \ref{solFormula} we put 
\[
q_{k}=\begin{cases}
1, & k\in I,\\
0, & k\notin I .
\end{cases}
\]
Then by formula (\ref{ckasym}) and Lemma \ref{lowerBoundsumf} we
have 
\[
q_{0}\Bigl(\frac{t}{2\omega_{1}}\Bigr)=\sum_{k\in I}J_{2k}(t)=\frac{1}{2}\sum_{k\in I}f_{k}(t)\Bigl(1+O\Bigl(\frac{1}{t}\Bigr)\Bigr)\geqslant c_{1}\sqrt{t}+c_{2}\frac{1}{\sqrt{t}}.
\]
Theorem \ref{existInitCond} is thus proved.
\end{proof}

Now we prove part 2 of Theorem \ref{thSolBounds}. For the sequence
of time moments $T_{1}<T_{2}<\ldots<T_{k}\ldots$ denote by $q^{1}(t),q^{2}(t),\ldots,q^{k}(t),\ldots$
the solutions corresponding to those in the formulation of Theorem
\ref{existInitCond}. We shall assume that $T_{1}>T_{0}$ and 
\[
bT_{k}<aT_{k+1}
\]
for all $k=1,2,\ldots$. This inequality guaranties that 
\[
\mathrm{supp}(q^{i}(0))\cap\mathrm{supp}(q^{j}(0))=\emptyset
\]
for $i\ne j$, and we can define the sum: 
\[
q(0)=\sum_{i=1}^{\infty}q_{i}(0).
\]
Due to linearity for the solution with the initial condition $q(0),p(0)=0$
we have: 
\[
q(t)=\sum_{i=1}^{\infty}q_{i}(t).
\]
To estimate $q(T_{k})$ write 
\[
q(T_{k})=\sum_{i=1}^{k-1}q^{i}(T_{k})+q^{k}(T_{k})+\sum_{i=k+1}^{\infty}q^{i}(T_{k}).
\]
By Proposition \ref{l2bound} 
\[
|q_{0}^{i}(T_{k})|\leqslant\sqrt{(b-a)T_{i}}.
\]
Whence 
\[
\bigg |\sum_{i=1}^{k-1}q_{0}^{i}(T_{k}) \bigg | \leqslant\sqrt{b-a}\sum_{i=1}^{k-1}\sqrt{T_{i}}.
\]
 We estimate the third term above using Lemma \ref{expBound} and get:
\[
\bigg | \sum_{i=k+1}^{\infty}q_{0}^{i}(T_{k}) \bigg | \leqslant\left(e^{\alpha_{k}+1}\alpha_{k}\right)^{2M},\quad\alpha_{k}=\omega_{1}\frac{T_{k}}{M}
\]
where $M=[aT_{k+1}]+1$. Choose $T_{k+1}>T_{k}$ so that 
\[
e^{\alpha_{k}+1}\alpha_{k}<1.
\]
Then we have the estimate: 
\[
q_{0}(T_{k})\geqslant c\sqrt{T_{k}}-\sqrt{b-a}\sum_{i=1}^{k-1}\sqrt{T_{i}}-1.
\]
The final condition for $T_{k}$ is 
\[
\sqrt{b-a}\sum_{i=1}^{k-1}\sqrt{T_{i}}+1<\frac{c}{2}\sqrt{T_{k}} .
\]
With such choice of the sequence ${T_{k}}$ we have: 
\[
q_{0}(T_{k})\geqslant\frac{c}{2}\sqrt{T_{k}}.
\]
Thus the constructed sequence $T_{k}$ provides initial condition
and sequence ${t_{k}}$, satisfying the assertion of the second part
of Theorem \ref{thSolBounds}. Thus the theorem is proved.

\section{Proof of Theorem \ref{202002161400}}

Plan of the proof is the following. First we will prove that finite
dimensional distributions of $q_{0}(t+s)$ weakly converge as $t\rightarrow\infty$
to the finite dimensional distribution of some Gaussian stationary
random process for $s\in[0,+\infty]$. This fact allows us to prove
(\ref{202002092003}) in quite straightforward manner. Next without
loss of generality we will suppose that $\omega_{1}= 1/2$.

Define a family (parametrized by $t\geqslant0$) of processes
with smooth trajectories: 
\[
Q_{t}(s)=q_{0}(t+s),\ s\in[0,+\infty).
\]

Define a process $X(s)$ as a series: 
\begin{equation}
X(s)=\sum_{n\in\mathbb{Z}}\xi_{n}J_{n}(s),\label{Xsseries}
\end{equation}
where $\xi_{n}$ are independent standard Gaussian random variables.
Lemma \ref{additionalLemma} gives us: 
\[
\sum_{n\in\mathbb{Z}}J_{n}^{2}(s)=1.
\]
Therefore from Kolmogorov's two-series theorem follows the almost
sure convergence of series in (\ref{Xsseries}) for all $s\in\mathbb{R}$.
Obviously $X(s)$ is a zero mean Gaussian random process. Let us
calculate its covariance function again using lemma \ref{additionalLemma}:
\[
\mathrm{cov}(X(t),X(s))=\sum_{n\in\mathbb{Z}}J_{n}(s)J_{n}(t)=J_{0}(t-s).
\]
Therefore $X(s)$ is a stationary (in a wide sense) process with covariance
function 
\[
C_{X}(s)=\mathrm{cov}(X(s),X(0))=J_{0}(s).
\]

Lemma \ref{convergenceLemma} states that finite dimensional distribution
of $Q_{t}(s)$ converges as $t\rightarrow \infty$ to the finite dimensional
distribution of $ (\sigma^{2} / 2) X(s)$. Further on without loss
of generality we will assume that $\gamma= \sigma^{2} / 2 =1$.
The fact that maximum of $X(s)$ over $s\geqslant0$ is infinite almost
sure easily follows from the classical theory of stationary Gaussian
process (see \cite{CramerLeadbetter}). Therefore intuitively it 
is clear that maximum of $q_{0}(t)$ is infinite with probability
one. But we can not use this arguments in strong way while we will
not prove a weak convergence of the corresponding processes. We will
not follow this way. Instead of proving weak convergence we derive
(\ref{202002092003}) directly from Lemma \ref{convergenceLemma}.

We have the following equalities: 
\[
  P\Bigl\{ \sup_{t\geqslant0}q_{0}(t)=+\infty\Bigr\} =P\biggl\{ \bigcap_{a=1}^{+\infty}\{\sup_{t\geqslant0}q_{0}(t)\geqslant a\}\biggr\}
  =\lim_{a\rightarrow+\infty}P\Bigl\{ \sup_{t\geqslant0}q_{0}(t)\geqslant a\Bigr\} .
\]

Now we prove that 
\begin{equation}
P\Bigl\{ \sup_{t\geqslant0}q_{0}(t)\geqslant a\Bigr\} =1\label{202002171214}
\end{equation}
for all $a$.

Fix an arbitrary $\varepsilon>0$. Note that due to lemma \ref{besselUniformBound}
for all $\varepsilon'>0$ and all $N\geqslant1$ there exist $s_{1},\ldots_{N}$
such that 
\begin{equation}
|\mathrm{cov}(X(s_{i}),X(s_{j}))|=|C(s_{i}-s_{j})|\leqslant\varepsilon'\label{202002161514}
\end{equation}
for all $i\ne j$. In other words, $X(s_{1}),\ldots,X(s_{N})$ are
``almost'' independent. Indeed, put $s_{k}=k\delta$ where $\delta^{-1/3}<\varepsilon'$.
Then using lemma \ref{besselUniformBound} we obtain: 
\[
|\mathrm{cov}(X(s_{i}),X(s_{j}))|=|J_{0}(s_{i}-s_{j})|\leqslant|i-j|^{-1/3} \, \delta^{-1/3}\leqslant\varepsilon'.
\]

Further we will choose $\varepsilon'$ and $N$ explicitly and they
will depend on $\varepsilon$. Now let $s_{1},\ldots,s_{N}$ satisfy
(\ref{202002161514}). For all $T\geqslant0$ we have the bound 
\begin{equation}
P\Bigl\{ \sup_{t\geqslant0}q_{0}(t)\geqslant a\Bigr\} \geqslant P\Bigl\{ \sup_{k=1,\ldots,N}q_{0}(T+s_{k})\geqslant a\Bigr\} . \label{202002171212}
\end{equation}

From lemma \ref{convergenceLemma} it follows that there is $T_{0}\geqslant0$
such that for all $T\geqslant T_{0}$ the following inequality holds: 
\begin{equation}
\Big |P\Bigl\{ \sup_{k=1,\ldots,N}q_{0}(T+s_{k})\geqslant a\Bigr\} -P\Bigl\{ \sup_{k=1,\ldots,N}X(s_{k})\geqslant a\Bigr\} \Big | \leqslant\varepsilon.\label{202002171213}
\end{equation}
Now we want to use lemma \ref{202002171147} to estimate $P\left\{ \sup_{k=1,\ldots,N}X(s_{k})\geqslant a\right\} $.
Note that since $p(0)=\Phi(a)<1$ and $p(\delta)$ is an increasing continuous
function ($p(\delta)$ is defined in (\ref{202002171140})), there
 exist small $\delta'$ and a number $q<1$ such that for all $\delta<\delta'$
the following inequality holds: $p(\delta)<q$. Now suppose that $N\varepsilon'<\delta'$.
Lemma \ref{202002171147} gives us 
\[
P\Bigl\{ \sup_{k=1,\ldots,N}X(s_{k})\geqslant a\Bigr\} \geqslant1-q^{N}
\]
which is greater than $1-\varepsilon$ for sufficiently large $N$.
Using (\ref{202002171212}) and (\ref{202002171213}) we obtain 
\[
P\Bigl\{ \sup_{t\geqslant0}q_{0}(t)\geqslant a\Bigr\} \geqslant1-2\varepsilon.
\]
Since $\varepsilon$ is arbitrary, we have proved (\ref{202002171214}).
The proof for the $\inf$ is the same. This completes the proof of
the theorem.

\begin{lemma} \label{convergenceLemma} For all $s_{1},\ldots,s_{m}\geqslant0$
the following convergence of distributions holds: 
\[
\mathrm{Law}(Q_{t}(s_{1}),\ldots,Q_{t}(s_{m}))\rightarrow\mathrm{Law}(\gamma X(s_{1}),\ldots,\gamma X(s_{m})),\;\,  \gamma=\frac{\sigma^{2}}{2}
\]
as $t\rightarrow\infty$. \end{lemma}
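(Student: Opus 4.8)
The plan is to reduce the multivariate statement to a one-dimensional central limit theorem via the Cram\'er--Wold device, after rewriting the $Q_t(s_i)$ as weighted sums of the i.i.d.\ variables $q_k(0)$ with Bessel-function weights. Fix $s_1,\dots,s_m\geqslant0$ and $\theta_1,\dots,\theta_m\in\mathbb{R}$, and take $\omega_1=1/2$ as in the proof of Theorem \ref{202002161400}. By Lemma \ref{convBesselSol} one has $q_0(t+s_i)=\sum_{k\in\mathbb{Z}}J_{2k}(t+s_i)\,q_{-k}(0)$, hence
\[
Z_t:=\sum_{i=1}^{m}\theta_i Q_t(s_i)=\sum_{k\in\mathbb{Z}}c_k(t)\,\xi_k,\qquad c_k(t):=\sum_{i=1}^{m}\theta_i J_{2k}(t+s_i),\quad \xi_k:=q_{-k}(0),
\]
where the $\xi_k$ are i.i.d.\ with $E\xi_k=0$, $E\xi_k^{2}=\sigma^{2}$, $E|\xi_0|^{3}<\infty$; once $\sum_k c_k(t)^{2}<\infty$ is known the series converges a.s.\ and in $L^{2}$. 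By Cram\'er--Wold it suffices to prove that $Z_t$ converges in distribution to $N(0,v^{2})$ with $v^{2}=\tfrac{\sigma^{2}}{2}\sum_{i,j}\theta_i\theta_j J_0(s_i-s_j)$; since $\gamma^{2}=\sigma^{2}/2$ and $\mathrm{cov}(X(s),X(s'))=J_0(s-s')$, this $N(0,v^{2})$ is exactly the law of $\sum_i\theta_i\gamma X(s_i)$.

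First I would compute $\mathrm{Var}(Z_t)$ exactly. As in the proof of Proposition \ref{l2bound}, $\cos\bigl((t+s_i)\sin(\lambda/2)\bigr)=\sum_{k\in\mathbb{Z}}J_{2k}(t+s_i)e^{-i\lambda k}$, so Parseval's identity together with $\cos A\cos B=\tfrac12\cos(A-B)+\tfrac12\cos(A+B)$ gives
\[
\sum_{k\in\mathbb{Z}}J_{2k}(t+s_i)J_{2k}(t+s_j)=\tfrac12\Bigl(J_0(s_i-s_j)+J_0(2t+s_i+s_j)\Bigr),
\]
whence $E Z_t^{2}=\sigma^{2}\sum_k c_k(t)^{2}=\tfrac{\sigma^{2}}{2}\sum_{i,j}\theta_i\theta_j\bigl(J_0(s_i-s_j)+J_0(2t+s_i+s_j)\bigr)$. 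Since $J_0(u)\to0$ as $u\to\infty$, the terms with argument $2t+s_i+s_j$ vanish in the limit, so $EZ_t^{2}\to v^{2}$ and, in particular, $\sup_{t\geqslant0}\sum_k c_k(t)^{2}<\infty$.

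For the negligibility of the summands I would use the classical uniform bound $\sup_{n}|J_n(u)|\leqslant C u^{-1/3}$ for $u\geqslant1$ (see \cite{Watson}, cf.\ Lemma \ref{besselUniformBound}): it gives $\sup_{k}|c_k(t)|\leqslant C\sum_i|\theta_i|(t+s_i)^{-1/3}\to0$ as $t\to\infty$. Combined with the boundedness of $\sum_k c_k(t)^{2}$ this is precisely the hypothesis of the Lindeberg--Feller theorem; equivalently Lyapunov's condition holds, since
\[
\sum_{k\in\mathbb{Z}}E|c_k(t)\xi_k|^{3}=E|\xi_0|^{3}\sum_{k\in\mathbb{Z}}|c_k(t)|^{3}\leqslant E|\xi_0|^{3}\Bigl(\sup_k|c_k(t)|\Bigr)\sum_{k\in\mathbb{Z}}c_k(t)^{2}\longrightarrow0 .
\]
The countable index set is harmless: the standard characteristic-function proof --- expanding the product $\prod_k\varphi(\lambda c_k(t))$ of characteristic functions of the $\xi_k$ through $\log\varphi(u)=-\tfrac{\sigma^{2}u^{2}}{2}+o(u^{2})$ and using $\sup_k|c_k(t)|\to0$ together with $\sup_t\sum_k c_k(t)^{2}<\infty$ --- goes through verbatim, and also absorbs the degenerate case $v=0$, where simply $Z_t\to0$ in $L^{2}$. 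Hence $Z_t$ converges in distribution to $N(0,v^{2})$; since $(\theta_i)$ is arbitrary, Cram\'er--Wold yields $\mathrm{Law}(Q_t(s_1),\dots,Q_t(s_m))\rightarrow\mathrm{Law}(\gamma X(s_1),\dots,\gamma X(s_m))$.

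The only genuinely delicate ingredients are the exact evaluation of $\mathrm{Var}(Z_t)$ via Parseval --- where one must invoke the decay $J_0(2t+\cdots)\to0$ --- and, above all, the uniform-in-index estimate $\sup_k|c_k(t)|\to0$, which rests on the classical uniform decay bound for Bessel functions rather than on anything elementary; without it the summands would not be asymptotically negligible and there would be no limit theorem. Everything else is the routine machinery of the central limit theorem for weighted sums of i.i.d.\ random variables.
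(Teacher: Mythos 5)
Your proof is correct and is essentially the paper's own argument in different packaging: the Cram\'er--Wold reduction with weights $\theta_i$ is exactly the paper's multivariate characteristic function $f_t(u_1,\dots,u_m)$ with $\phi_n(t,\bar u)=\sum_j u_jJ_{2n}(t+s_j)$ playing the role of your $c_k(t)$, your Lyapunov bound $\sum_k|c_k(t)|^3\leqslant(\sup_k|c_k(t)|)\sum_kc_k(t)^2\to0$ via the uniform Bessel decay is the paper's estimate (\ref{mark_1)}), and your Parseval computation of $\sum_kJ_{2k}(t+s_i)J_{2k}(t+s_j)$ reproduces identity (\ref{additionalThN}), which the paper instead quotes as Neumann's addition theorem (Lemma \ref{additionalLemma}). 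These are only cosmetic differences, so the two proofs coincide in substance.
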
 The proof is straightforward
and based on the continuity theorem for characteristic function. From
formula (\ref{202003042303}) we have: 
\[
q_{0}(t)=\sum_{n\in\mathbb{Z}}q_{n}(0)J_{2n}(t).
\]
Consider the characteristic function of the random vector $Q_{t}(s_{1}),\ldots,Q_{t}(s_{m})$:
\[
f_{t}(u_{1},\ldots,u_{m})=E\exp\biggl(i\sum_{j=1}^{m}u_{j}Q_{t}(s_{j})\biggr).
\]
% with $u_{1},\ldots,u_{m}$.
Due to dominated convergence theorem and
independence of $q_{k}(0)$ we obtain: 
\[
f_{t}(u_{1},\ldots,u_{m})=E\exp{\biggl(i\sum_{n\in\mathbb{Z}}q_{n}(0)\sum_{j=1}^{m}u_{j}J_{2n}(t+s_{j})\biggr)}=\prod_{n\in\mathbb{Z}}h(\phi_{n}(t,\bar{u})),
\]
where $h(u)=E\exp(iq_{0}(0)u)$ is the characteristic function of
$q_{0}(0)$, and 
\[
\phi_{n}(t,\bar{u})=\sum_{j=1}^{m}u_{j}J_{2n}(t+s_{j}),
\]
with $\bar{u}=(u_{1},\ldots,u_{m})$. Fix $u_{1},\ldots,u_{m}$. Because
of lemma \ref{besselUniformBound} for all $\varepsilon>0$ we can
choose $t_{0}\geqslant0$ such that for all $t\geqslant t_{0}$ all
points $\phi_{n}(t,\bar{u}),\ n\in\mathbb{Z}$ lie in $\varepsilon$-neighborhood
of zero. And since $h(0)=1$, we can consider the principal branch
of logarithm: 
\begin{equation}
f_{t}(u_{1},\ldots,u_{m})=\exp\Bigl(\sum_{n\in\mathbb{Z}}\mathrm{Log}(h(\phi_{n}(t,\bar{u})))\Bigr).\label{charfutf}
\end{equation}
From smoothness it follows that for sufficiently small $u$ the following
formula holds 
\[
\mathrm{Log}(h(u))=-\frac{1}{2}\sigma^{2}u^{2}+O(u^{3}).
\]
Hence we have 
\begin{equation}
\mathrm{Log}(h(\phi_{n}(t,\bar{u})))=-\frac{1}{2}\sigma^{2}\phi_{n}^{2}(t,\bar{u})+O(\phi_{n}^{3}(t,\bar{u})).\label{Loghphuieq}
\end{equation}
Using lemma \ref{additionalLemma} we get 
\begin{align*}
\sum_{n\in\mathbb{Z}}\phi_{n}^{2}(t,\bar{u})&=\sum_{j,k=1,\ldots,m}u_{j}u_{k}\sum_{n\in\mathbb{Z}}J_{2n}(t+s_{j})J_{2n}(t+s_{k})\\
                                            &=\frac{1}{2}\sum_{j,k=1,\ldots,m}u_{j}u_{k}\left(J_{0}(2t+s_{j}+s_{k})+J_{0}(s_{j}-s_{k})\right)\\
  &\longrightarrow\frac{1}{2}\sum_{j,k=1,\ldots,m}u_{j}u_{k}J_{0}(s_{j}-s_{k})
\end{align*}
as $t\rightarrow\infty$. Next we study the sum of cubes. Note that
due to H\"older inequality we have an estimate: 
\[
|\phi_{n}^{3}(t,\bar{u})|\leqslant m^{2}\sum_{j=1}^{m}u_{j}^{3}J_{2n}^{3}(t+s_{j}).
\]
From this bound and lemma \ref{besselUniformBound} we obtain: 
\begin{align}
  \Bigl |\sum_{n\in\mathbb{Z}}\phi_{n}^{3}(t,\bar{u})\Bigr | &\leqslant m^{2}\sum_{j=1}^{m}u_{j}^{3}\sum_{n\in\mathbb{Z}}|J_{2n}^{3}(t+s_{j})|
                                                               \leqslant m^{2}t^{-1/3}\sum_{j=1}^{m}u_{j}^{3}\sum_{n\in\mathbb{Z}}J_{2n}^{2}(t+s_{j})
\nonumber \\
&=m^{2}t^{-1/3}\sum_{j=1}^{m}u_{j}^{3}\frac{J_{0}(2(t+s_{j}))+J_{0}(0)}{2} \longrightarrow 0 \label{mark_1)}
\end{align}
as $t\rightarrow\infty$. In equality (\ref{mark_1)}) we applied
lemma \ref{additionalLemma}.

Thus formulas (\ref{charfutf}) and (\ref{Loghphuieq}) give us: 
\[
\lim_{t\rightarrow\infty}f(t,u_{1},\ldots,u_{m})=\exp\Bigl(-\frac{\sigma^{2}}{4}\sum_{j,k=1,\ldots,m}u_{j}u_{k}J_{0}(s_{j}-s_{k})\Bigr).
\]
Applying the continuity theorem for characteristic functions we obtain
the assertion of lemma \ref{convergenceLemma}. This completes the
proof of the lemma.

\begin{lemma} \label{202002171147} Let $X_{1},\ldots,X_{N}$ be
a zero mean Gaussian random vector with $N\geqslant2$ such that:
\[
DX_{k}=1,\quad|\mathrm{cov}(X_{i},X_{j})|\leqslant\varepsilon',
\]
for all $k=1,\ldots,N$ and all $i\ne j$ and some $\varepsilon'>0$.
Assume the following inequality holds: 
\begin{equation}
N\varepsilon'\leqslant\delta<\frac{1}{2} . \label{202002161640}
\end{equation}
Then there is the estimate: 
\begin{equation}
P\Bigl\{ \sup_{k=1,\ldots,N}X_{k}\geqslant a\Bigr\} \geqslant1-p^{N},\quad p=p(\delta)=\sqrt{\frac{1+\delta}{1-\delta}}\,\Phi(a\sqrt{1+\delta})\label{202002171140}
\end{equation}
where $\Phi(x)$ is a cumulative distribution function of standard
normal distribution: 
\[
\Phi(x)=\frac{1}{\sqrt{2\pi}}\int_{-\infty}^{x}e^{-u^{2} / 2}du.
\]
\end{lemma}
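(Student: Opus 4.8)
The plan is to pass to the complementary event: since $\{\sup_{k=1,\dots,N}X_k\ge a\}$ is exactly the complement of $\{X_k<a\text{ for all }k\}$, it suffices to show $P\{X_k<a,\ k=1,\dots,N\}\le p^{N}$. Let $\Sigma=(\Sigma_{ij})_{i,j=1}^{N}$ be the covariance matrix of $(X_1,\dots,X_N)$, so $\Sigma_{ii}=1$ and $|\Sigma_{ij}|\le\varepsilon'$ for $i\ne j$. The idea I would use is a pointwise domination of the Gaussian density $p_X$ of the vector $X$ by a constant multiple of the density $p_Y$ of $Y\sim N(0,(1+\delta)I)$, i.e.\ of $N$ i.i.d.\ centred Gaussians with variance $1+\delta$; integrating that inequality over the box $\{x\in\mathbb{R}^{N}:x_k<a,\ k=1,\dots,N\}$ turns the problem into an elementary one about independent variables. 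A direct Slepian-type comparison is awkward here, because the variances must match for such inequalities and the $\mathrm{cov}(X_i,X_j)$ may have either sign; the density estimate sidesteps this.

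The key step, and the only delicate one, is a two-sided spectral bound on $\Sigma$ obtained from the hypothesis $N\varepsilon'\le\delta<1/2$. By the Gershgorin disc theorem every eigenvalue of $\Sigma$ lies in $\bigcup_{i}\bigl[\Sigma_{ii}-\sum_{j\ne i}|\Sigma_{ij}|,\ \Sigma_{ii}+\sum_{j\ne i}|\Sigma_{ij}|\bigr]\subset[1-(N-1)\varepsilon',\,1+(N-1)\varepsilon']\subset[1-\delta,\,1+\delta]$. Consequently $\Sigma$ is positive definite (so $X$ does possess a density), $\det\Sigma=\prod_{i}\lambda_i\ge(1-\delta)^{N}$, and $\Sigma^{-1}\succeq\frac{1}{1+\delta}I$, hence $x^{\mathrm{T}}\Sigma^{-1}x\ge|x|^{2}/(1+\delta)$ for all $x\in\mathbb{R}^N$. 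This is precisely where the normalization $N\ge 2$ and the quantitative assumption $N\varepsilon'\le\delta<1/2$ are really used.

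Putting this into $p_X(x)=(2\pi)^{-N/2}(\det\Sigma)^{-1/2}\exp\!\bigl(-\tfrac12 x^{\mathrm{T}}\Sigma^{-1}x\bigr)$ gives $p_X(x)\le(2\pi)^{-N/2}(1-\delta)^{-N/2}\exp\!\bigl(-|x|^{2}/(2(1+\delta))\bigr)=\bigl(\tfrac{1+\delta}{1-\delta}\bigr)^{N/2}p_Y(x)$ for every $x$. Integrating over $\{x_k<a,\ k=1,\dots,N\}$ and using that $Y$ has i.i.d.\ components $Y_k\sim N(0,1+\delta)$, so that $P\{Y_k<a,\ k=1,\dots,N\}=\Phi\bigl(a/\sqrt{1+\delta}\bigr)^{N}$, one obtains $P\{X_k<a,\ k=1,\dots,N\}\le\bigl(\sqrt{\tfrac{1+\delta}{1-\delta}}\,\Phi(a/\sqrt{1+\delta})\bigr)^{N}$, hence $P\{\sup_{k\le N}X_k\ge a\}\ge1-p^{N}$ with $p=\sqrt{\tfrac{1+\delta}{1-\delta}}\,\Phi\bigl(a/\sqrt{1+\delta}\bigr)$. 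This gives the asserted estimate (for $a\ge0$ it is in fact slightly stronger than the displayed one, and for the application in Theorem~\ref{202002161400} all that matters is that $p\to\Phi(a)<1$ as $\delta\to0$, which is manifest). The genuine obstacle is thus not the probabilistic reasoning but the linear-algebra input: converting the entrywise smallness $|\Sigma_{ij}|\le\varepsilon'$, together with $N\varepsilon'\le\delta$, into simultaneous control of $\det\Sigma$ and of the extreme eigenvalues of $\Sigma$.
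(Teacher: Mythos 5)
Your proof is correct and follows essentially the same route as the paper: pass to the complementary event, use the Gershgorin circle theorem with $N\varepsilon'\leqslant\delta<\tfrac12$ to get $\det C\geqslant(1-\delta)^{N}$ and $x^{\mathrm{T}}C^{-1}x\geqslant|x|^{2}/(1+\delta)$, then integrate the dominated Gaussian density over the box $(-\infty,a)^{N}$. The only discrepancy is the argument of $\Phi$: your $\Phi\bigl(a/\sqrt{1+\delta}\bigr)$ is what the computation actually yields (the paper's $\Phi\bigl(a\sqrt{1+\delta}\bigr)$ stems from a slip in evaluating $\int_{-\infty}^{a}e^{-x^{2}/(2\lambda)}\,dx$), and as you observe this is immaterial for the application since in either form $p(\delta)\to\Phi(a)<1$ as $\delta\to0$.
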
 We have: 
\[
P\Bigl\{ \sup_{k=1,\ldots,N}X_{k}\geqslant a\Bigr\} =1-P\Bigl\{ \sup_{k=1,\ldots,N}X_{k}<a\Bigr\} .
\]
Denote by $C$  the covariance
matrix of the random vector $X$, $$C=(\mathrm{cov}(X_{i},X_{j}))_{i,j=1,\ldots,N}.$$ Due to Gershgorin circle theorem
all eigenvalues of $C$ lye in the circle with radius $(N-1)\varepsilon'< 1/2$
and center at $1$. Therefore $C$ is invertible and we can write
\[
P\Bigl\{ \sup_{k=1,\ldots,N}X_{k}<a\Bigr\} =\frac{1}{(\sqrt{2\pi})^{N}\sqrt{\det(C)}}\int_{[-\infty,a]^{N}}\exp\Bigl\{-\frac{(x,C^{-1}x)}{2}\Bigr\}dx,
\]
where by $(,)$ we denoted the standard Euclidean scalar product in
$\mathbb{R}^{N}$. We have an obvious inequality: 
\[
(x,C^{-1}x)\geqslant\frac{1}{\lambda}|x|^{2}
\]
where $|x|$ is standard Euclidean norm of the vector $x$ in $\mathbb{R}^{N}$
and $\lambda$ is the maximal eigenvalue of $C$. Hence we obtain
for the probability 
\begin{align}
  P\Bigl\{ \sup_{k=1,\ldots,N}X_{k}<a\Bigr\} &\leqslant\frac{1}{(\sqrt{2\pi})^{N}\sqrt{\det(C)}}\int_{[-\infty,a]^{N}}
                                               \exp\Bigl\{-\frac{|x|^{2}}{2\lambda}\Bigr\} dx \nonumber \\
  &=\frac{\sqrt{\lambda^{N}}}{\sqrt{\det(C)}}\Phi^{N}(\sqrt{\lambda}a).\label{202002171139}
\end{align}
Evidently we have $\det(C)=\lambda_{1}\ldots\lambda_{N}$ where $\lambda_{1},\ldots,\lambda_{N}$
are eigenvalues of $C$. Thus applying Gershgorin circle theorem we
get: 
\[
\det(C)\geqslant(1-\delta)^{N},\quad\lambda\leqslant1+\delta.
\]
Putting these inequalities to (\ref{202002171139}) we obtain (\ref{202002171140}).
This completes the proof of the lemma.

\begin{lemma} \label{besselUniformBound} For all $n\in\mathbb{Z}$
and every $t\in\mathbb{R}$ the following inequality holds 
\[
|J_{n}(t)| \leqslant\min\{|n|^{-1/3},|t|^{-1/3}\} .
\]
\end{lemma}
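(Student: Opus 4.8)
The plan is to reduce the claim to the classical uniform estimates for Bessel functions after disposing of the elementary cases by hand. First I would reduce to $n\geqslant 0$, $t\geqslant 0$: for integer $n$ one has $J_{-n}(t)=(-1)^{n}J_{n}(t)$ and $J_{n}(-t)=(-1)^{n}J_{n}(t)$, so both $|J_{n}(t)|$ and $\min\{|n|^{-1/3},|t|^{-1/3}\}$ depend only on $|n|$ and $|t|$. The Poisson-type integral $J_{n}(t)=\frac{1}{\pi}\int_{0}^{\pi}\cos(t\sin\phi-n\phi)\,d\phi$ gives $|J_{n}(t)|\leqslant 1$ immediately, so the bound $|J_{n}(t)|\leqslant|t|^{-1/3}$ is automatic for $0<t\leqslant 1$, and $|J_{n}(t)|\leqslant|n|^{-1/3}$ is automatic for $n\in\{0,1\}$ (reading $0^{-1/3}=+\infty$). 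Moreover, from the three-term recurrence $2nJ_{n}(t)=t\,(J_{n-1}(t)+J_{n+1}(t))$ together with $|J_{k}(t)|\leqslant 1$ one gets the cheap inequality $|J_{n}(t)|\leqslant t/n$, which already yields $|J_{n}(t)|\leqslant n^{-1/3}$ whenever $t\leqslant n^{2/3}$.

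After these reductions only the ``core'' range remains, namely $|J_{n}(t)|\leqslant t^{-1/3}$ for $t>1$, and $|J_{n}(t)|\leqslant n^{-1/3}$ for $n\geqslant 2$ and $t>n^{2/3}$; both are contained in the well-known uniform bounds for Bessel functions. Precisely, for every real $\nu\geqslant 0$ and $t>0$,
\[
t^{1/3}|J_{\nu}(t)|\leqslant c_{L}<1,\qquad \nu^{1/3}|J_{\nu}(t)|\leqslant b_{L}<1\ \text{ for }\nu\geqslant 1,
\]
with $c_{L}\approx 0.786$ and $b_{L}\approx 0.675$ (see Watson \cite{Watson}; the optimal constants are due to Landau and to Krasikov). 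Taking $\nu=n$ in these inequalities closes the proof.

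If one wishes to stay self-contained, the correct decay rate can be recovered directly by the stationary phase / van der Corput method applied to $2\pi J_{n}(t)=\mathrm{Re}\int_{-\pi}^{\pi}e^{i\psi(\phi)}\,d\phi$ with $\psi(\phi)=t\sin\phi-n\phi$: on two short arcs around $\phi=0,\pm\pi$ one has $\psi''=0$ but $|\psi'''|=t|\cos\phi|$ bounded below, so the third-derivative test applies; on each complementary arc $\psi'$ is strictly monotone and $|\psi''|=t|\sin\phi|$ is bounded below, so the second-derivative test applies; and the transition zone $|t-n|=O(n^{1/3})$, where the phase develops a degenerate critical point, is exactly the situation covered by the third-derivative test and produces the $(\min\{n,t\})^{-1/3}$ order. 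The point this cruder argument does not reach is the value of the constant: van der Corput's tests only furnish an absolute constant strictly larger than $1$, and bringing it down to exactly $1$ genuinely requires the sharp turning-point (Airy) asymptotics of $J_{n}$ near $t\approx n$. That precise constant in the transition regime is the only delicate point, and it is what makes invoking the classical sharp inequalities of the previous paragraph the cleanest route.
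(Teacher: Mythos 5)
Your proof is correct and rests on exactly the same foundation as the paper's, which simply cites the sharp uniform bounds of Landau and Krasikov ($t^{1/3}|J_{\nu}(t)|\leqslant c_{L}<1$ and $\nu^{1/3}|J_{\nu}(t)|\leqslant b_{L}<1$). Your additional elementary reductions and the stationary-phase sketch are sound but not needed once those inequalities are invoked, and you correctly identify that the constant $1$ cannot be reached by the crude van der Corput argument alone.
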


For the proof see \cite{Landau,Krasikov}.

\begin{lemma}[Neumann's additional theorem] \label{additionalLemma}
For all $\varphi,t_{1},t_{2}\in\mathbb{R}$ the following formula
holds : 
\begin{equation}
J_{0}(\bar{t})=\sum_{n\in\mathbb{Z}}J_{n}(t_{1})J_{n}(t_{2})\cos(n\varphi),\label{Neumann}
\end{equation}
where $\bar{t}=\sqrt{t_{1}^{2}+t_{2}^{2}-2t_{1}t_{2}\cos\varphi}$.
As a particular case of (\ref{Neumann}) we have: 
\begin{equation}
\sum_{n\in\mathbb{Z}}J_{2n}(t_{1})J_{2n}(t_{2})=\frac{1}{2}\left(J_{0}(t_{1}+t_{2})+J_{0}(t_{1}-t_{2})\right) . \label{additionalThN}
\end{equation}
\end{lemma}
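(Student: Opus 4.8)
\textbf{Proof plan for Lemma \ref{additionalLemma}.}
The plan is to derive the general formula (\ref{Neumann}) from the Jacobi--Anger expansion and a single elementary trigonometric simplification, and then obtain (\ref{additionalThN}) by specializing $\varphi$ to $0$ and $\pi$. First I would upgrade the real integral presentation
$J_n(t)=\frac1\pi\int_0^\pi\cos(t\sin\phi-n\phi)\,d\phi$ quoted above to its complex form
$J_n(t)=\frac1{2\pi}\int_{-\pi}^{\pi}e^{i(t\sin\phi-n\phi)}\,d\phi$ (the imaginary part integrates to zero by oddness); this exhibits $\{J_n(t)\}_{n\in\mathbb Z}$ as the Fourier coefficients of $\phi\mapsto e^{it\sin\phi}$, hence
\[
e^{it\sin\phi}=\sum_{n\in\mathbb Z}J_n(t)e^{in\phi}.
\]
The power-series bound $|J_n(t)|\leqslant(|t|/2)^{|n|}/|n|!$ shows this series converges absolutely and uniformly in $\phi$ for each fixed $t$, which will legitimize the interchange of sum and integral below.

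Next, for fixed real $t_1,t_2,\varphi$ I would compute $\sum_{n}J_n(t_1)J_n(t_2)e^{in\varphi}$ by substituting the complex integral representation for $J_n(t_2)$ and exchanging summation and integration:
\[
\sum_{n\in\mathbb Z}J_n(t_1)J_n(t_2)e^{in\varphi}
=\frac1{2\pi}\int_{-\pi}^{\pi}e^{it_2\sin\phi}\Bigl(\sum_{n\in\mathbb Z}J_n(t_1)e^{in(\varphi-\phi)}\Bigr)d\phi
=\frac1{2\pi}\int_{-\pi}^{\pi}e^{i\bigl(t_2\sin\phi+t_1\sin(\varphi-\phi)\bigr)}\,d\phi,
\]
where the last step uses the generating function with argument $\varphi-\phi$. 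Expanding $\sin(\varphi-\phi)$ gives
$t_2\sin\phi+t_1\sin(\varphi-\phi)=(t_2-t_1\cos\varphi)\sin\phi+(t_1\sin\varphi)\cos\phi$, which is of the form $A\sin\phi+B\cos\phi=\bar t\,\sin(\phi+\delta)$ with $A^2+B^2=t_1^2+t_2^2-2t_1t_2\cos\varphi=\bar t^{\,2}$. Since the integrand is $2\pi$-periodic, a shift of the variable removes $\delta$, and the integral equals $\frac1{2\pi}\int_{-\pi}^{\pi}e^{i\bar t\sin\psi}\,d\psi=J_0(\bar t)$. Replacing $\varphi$ by $-\varphi$ leaves $\bar t$ unchanged, so averaging the two identities yields $\sum_n J_n(t_1)J_n(t_2)\cos(n\varphi)=J_0(\bar t)$, i.e. (\ref{Neumann}); the degenerate case $\bar t=0$ is covered by continuity in $\varphi$.

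Finally, (\ref{additionalThN}) follows by adding (\ref{Neumann}) at $\varphi=0$ (giving $\sum_n J_n(t_1)J_n(t_2)=J_0(t_1-t_2)$, using that $J_0$ is even) and at $\varphi=\pi$ (giving $\sum_n(-1)^nJ_n(t_1)J_n(t_2)=J_0(t_1+t_2)$): the odd-$n$ terms cancel and the even-$n$ terms double, so $\sum_k J_{2k}(t_1)J_{2k}(t_2)=\tfrac12\bigl(J_0(t_1+t_2)+J_0(t_1-t_2)\bigr)$. I expect no serious obstacle here; the only point requiring a little care is the justification of the term-by-term integration, which is handled by the super-exponential decay of $J_n(t)$ in $n$ noted above, and (optionally) one may simply invoke \cite{Watson} for this classical identity.
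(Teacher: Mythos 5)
Your proposal is correct. The only difference from the paper is one of completeness: the paper does not prove the general identity (\ref{Neumann}) at all --- it simply cites Watson (pp.~358--359) --- and its entire ``proof'' consists of the specialization step, namely putting $\varphi=0$ and $\varphi=\pi$ in (\ref{Neumann}), adding, and using $J_0(t)=J_0(-t)$; your final paragraph reproduces exactly that step. What you add is a self-contained and standard derivation of (\ref{Neumann}) from the Jacobi--Anger expansion: the computation $\sum_n J_n(t_1)J_n(t_2)e^{in\varphi}=\frac1{2\pi}\int_{-\pi}^{\pi}e^{i(t_2\sin\phi+t_1\sin(\varphi-\phi))}\,d\phi$ is right, the reduction $A\sin\phi+B\cos\phi=\bar t\sin(\phi+\delta)$ with $A^2+B^2=\bar t^{\,2}$ checks out, and the interchange of sum and integral is legitimately covered by the rapid decay of $J_n(t)$ in $n$ (the crude bound $|J_n(t)|\leqslant \frac{(|t|/2)^{|n|}}{|n|!}e^{t^2/4}$ from the power series already suffices, so the sharper constant you quote is not even needed). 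Averaging over $\varphi$ and $-\varphi$, or simply taking real parts, gives the cosine form. This buys a proof that does not lean on the external reference, at the cost of a page of classical analysis; for the purposes of the paper either route is acceptable.
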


One can find Neumann's additional theorem (\ref{Neumann})
in classical book \cite{Watson}, p.\thinspace 358--359. Equality (\ref{additionalThN})
immediately follows from (\ref{Neumann}) if we put $\varphi=0$ and
then $\varphi=\pi$ and next sum up two expressions. Here we used
that $J_{0}(t)=J_{0}(-t)$.

Let us make a remark about weak convergence of the process $q_{0}(t+s)$.
Due to additional theorem \cite[p.\thinspace 30]{Watson},  we have : 
\[
J_{2k}(t+s)=\sum_{n\in\mathbb{Z}}J_{n}(s)J_{2k-n}(t).
\]
Therefore due to formula (\ref{202003042303}) we obtain 
\[
q_{0}(t+s)=\sum_{n\in\mathbb{Z}}J_{n}(s)y_{n}(t),
\]
where 
\[
y_{n}(t)=\sum_{k\in\mathbb{Z}}q_{k}(0)J_{2k-n}(t).
\]
It is not hard to prove that $y_{n}(t)$ converges to $\xi_{n}$ in distribution
as $t\rightarrow\infty$, where $\xi_{n}$ are standard normal independent
random variables. At the next step one can prove the relative compactness
of the family $q_{0}(t+s),s\in[0,S]$ for some $S\geqslant0$, parametrized
by $t$. We have no need for weak convergence so we have omitted this
proof.


\begin{thebibliography}{1}
\bibitem{fedoruck}
  {   M.V.~Fedoryuk} (1977)
\newblock  {\it Saddle Point Method}.  \newblock Nauka, Moscow (in Russian).

\bibitem{LM_4}
  {    A.A.~Lykov and V.A.~Malyshev} (2013)
\newblock   Convergence to Gibbs equilibrium -- unveiling the mystery.
\newblock   {\it Markov Process.\ and Relat.\ Fields\/} 
{\bf 19} (4), 643--666.

\bibitem{LM_6}
{    A.A.~Lykov and V.A.~Malyshev} (2015)
\newblock Liouville ergodicity
of linear multi-particle Hamiltonian systems with one marked particle
velocity flips.
\newblock {\it Markov Process.\ and Relat.\ Fields\/}   {\bf 21} (2), 381--412.

\bibitem{LM_7}
  {    A.A.~Lykov and V.A.~Malyshev} (2017)
\newblock   Convergence to equilibrium
  for many particle systems.
\newblock   In:~{\it  Modern Problems in Stochastic
Analysis and Statistics}, 271--301. Springer.

\bibitem{Landau}
  {  L.~Landau} (2000)
\newblock   Bessel functions: monotonicity and
  bounds.
\newblock   {\it  J.\ London Math.\ Soc.} {\bf 61} (1), 197--215.


\bibitem{Krasikov}
  {  I.~Krasikov} (2006)
\newblock   Uniform bounds for Bessel functions.
\newblock {\it J.\ Appl.\ Anal.} {\bf 12} (1), 83--91.

\bibitem{Watson}
  {  G.N.~Watson} (1922)
\newblock   {\it A Treatise on the Theory of Bessel
Functions}. \newblock Cambridge.

\bibitem{CramerLeadbetter}
  {    H.~Cramer and M.~Leadbether} (1967)
\newblock   {\it Stationary
and Related Stochastic Processes. Sample Function Properties and Their
Applications}.
\newblock John Wiley \& Sons. 
\end{thebibliography}
\end{document}